\newtheorem{theorem}{Theorem}
\newtheorem{proposition}[theorem]{Proposition}
\newtheorem{lemma}[theorem]{Lemma}
\newcommand{\highavg}{\left(\frac{1}{2} + \epsilon\right)}
\newcommand{\lowavg}{\left(\frac{1}{2} - \epsilon\right)}
\newcommand{\datavec}{\Vec{\mathbf{y}}}
\newcommand{\nslots}{\bar{k}}
\newcommand{\lapker}{\texttt{GroupFreeKernel}\xspace} 
\newcommand{\groupaware}{\texttt{GroupAware}\xspace} 
\newcommand{\indiv}{\texttt{Individual}\xspace}
\newcommand{\louvain}{\texttt{Louvain}\xspace}
\newcommand{\greedymodu}{\texttt{GreedyModularity}\xspace}
\newcommand{\polblogs}{PolBlogs}
\newcommand{\emaileu}{Email-EU}
\newcommand{\lastfm}{Lastfm-Asia}
\newcommand{\deezer}{Deezer-Europe}
  \providecommand\BibTeX{{%
    \normalfont B\kern-0.5em{\scshape i\kern-0.25em b}\kern-0.8em\TeX}}}
\renewcommand{\vec}[1]{\bm{#1}}
\begin{document}

\title{Group fairness without demographics using social networks}

\author{David Liu}
\authornote{Work done while at FAIR, Meta AI.}
\email{liu.davi@northeastern.edu}
\affiliation{%
    \institution{Northeastern University}
    \city{Boston}
    \country{USA}
}
\author{Virginie Do}
\email{virginie.do@dauphine.eu}
\affiliation{%
    \institution{FAIR, Meta AI and LAMSADE, PSL, Université Paris Dauphine} 
    \city{Paris}
    \country{France}
}
\author{Nicolas Usunier}
\email{usunier@meta.com}
\affiliation{%
    \institution{FAIR, Meta AI}
    \city{Paris}
    \country{France}
}
\author{Maximilian Nickel}
\email{maxn@meta.com}
\affiliation{%
    \institution{FAIR, Meta AI}
    \city{New York}
    \country{USA}
}

\begin{abstract}
Group fairness is a popular approach to prevent unfavorable treatment of individuals based on sensitive attributes such as race, gender, and disability. However, the reliance of group fairness on access to discrete group information raises several limitations and concerns, especially with regard to privacy, intersectionality, and unforeseen biases. In this work, we propose a ``group-free" measure of fairness that does not rely on sensitive attributes and, instead, is based on homophily in social networks, i.e., the common property that individuals sharing similar attributes are more likely to be connected. Our measure is group-free as it avoids recovering any form of group memberships and uses only pairwise similarities between individuals to define inequality in outcomes relative to the homophily structure in the network. We theoretically justify our measure by showing it is commensurate with the notion of additive decomposability in the economic inequality literature and also bound the impact of non-sensitive confounding attributes. Furthermore, we apply our measure to develop fair algorithms for classification, maximizing information access, and recommender systems. Our experimental results show that the proposed approach can reduce inequality among protected classes without knowledge of sensitive attribute labels. We conclude with a discussion of the limitations of our approach when applied in real-world settings.
\end{abstract}

\begin{CCSXML}
<ccs2012>
   <concept>
       <concept_id>10010147.10010178</concept_id>
       <concept_desc>Computing methodologies~Artificial intelligence</concept_desc>
       <concept_significance>500</concept_significance>
       </concept>
   <concept>
       <concept_id>10003752.10010070.10010099.10003292</concept_id>
       <concept_desc>Theory of computation~Social networks</concept_desc>
       <concept_significance>500</concept_significance>
       </concept>
 </ccs2012>
\end{CCSXML}

\ccsdesc[500]{Computing methodologies~Artificial intelligence}
\ccsdesc[500]{Theory of computation~Social networks}

\keywords{group fairness, social networks, homophily}

\maketitle

\section{Introduction}
Fairness has become a central area of machine learning research in order to reduce discrimination and inequalities of outcomes across protected groups in algorithmic decision-making \citep{barocas-hardt-narayanan}.  
Group fairness in particular has received significant attention, where the goal is to reduce the inequality of outcomes between subgroups of a population, e.g., groups defined by demographic attributes, following concepts from anti-discrimination laws \citep{barocas2016big,feldman2015certifying} or distributive justice \citep{roemer2015equality,equality-of-opportunity}.
However, most approaches to promoting equality of outcomes between groups suffer from two weaknesses: First, they are often only suited for a limited number of coarse, predefined groups and do not protect from outcome inequalities at the intersection of these groups, for structured combinations of groups (i.e., fairness gerrymandering) or for groups that are not included in, for instance, anti-discrimination laws but may need protecting \citep{binns2020apparent,wachter2019right}.
Second, they require access to sensitive attributes, which raises privacy concerns and automatically excludes group fairness in scenarios where such information should not be collected or even is legally not permitted \citep{andrus2022demographic}. Addressing \emph{fairness without demographics}, i.e., without access to sensitive attributes, is considered one of the important open problems of algorithmic fairness in practice \citep{holstein2019improving,veale2017fairer,demographic-risks}.

One line of research to approach fairness without demographics assumes access to correlates with sensitive attributes, e.g., features and labels that might be informative about unobserved group memberships. This includes, for instance, work based on proxy features \citep{gupta2018proxy}, as well as methods such as adversarially reweighted learning (ARL)~\citep{lahoti2020fairness}. Access to correlates that are acceptable to use allows to better focus fairness intervention toward relevant groups compared to early worst-case approaches to fairness without demographics \citep{hashimoto2018fairness}. However, it leaves open the question of what these correlates can be in practice, a question that is particularly difficult since understanding how specific variables correlate to groups of interest often requires some level of access to sensitive attributes and dedicated scientific investigations. 

In this paper, we turn to insights from social science and social network analysis to advance fairness without demographics. In particular, we focus on homophily, i.e., the common property of social networks that individuals sharing similar (sensitive as well as non-sensitive) attributes are more likely to be connected to each other than individuals that are dissimilar \citep{mcpherson2001birds}. However, it would be vacuous to infer latent group memberships based on homophily and naively apply the standard group fairness machinery to these labels. While it might help when sensitive attributes are unknown, it would still encounter all the problems of group fairness ranging from privacy to the inadequacy of discrete groups. Instead, we also propose a novel framework to formalize group fairness in a group-free way, with two main characteristics:  a) it does not require any information about sensitive attributes b) it is based purely on the similarity of individuals relative to their links in a social network and c) it does not use or infer any notion of labels or groups for individuals. While these characteristics do not provide formal privacy guarantees per se as, for instance, in differential privacy, we consider them reasonably privacy respecting as they do not use or infer sensitive attributes at any step.

Similar to  the work of \citet{speicher2018unified} our framework is grounded in the decomposition of inequality indices used in economics \citep{cowell2011measuring} that decompose overall inequality into within- and between-group inequality. We show how to extend this decomposition to non-discrete groups, which allows us to generalize the notion of between-group inequality to our case where we only have access to similarity between individuals. In our case, this similarity captures group structure via homophily and we show how these similarity values can be inferred with standard methods using only topological information and without ever accessing or inferring any kind of group information. 
To summarize, our main contributions are as follows:
\begin{description}
    \item[Social Network Homophily] We propose a novel measure of group fairness that does not depend on group labels and instead uses homophily in social networks to reduce inequality in outcomes.
    \item[Group-Free Group Fairness] Our approach is a measure of inequality that is ``group-free" in that it avoids attempting to define groups entirely and is solely based on the similarities of individuals.
    \item[Theoretical Analysis] We support our measure through theoretical analysis and show that it is commensurate with the notion of additive decomposability in the economic inequality literature. We further characterize our measure's performance in the presence of confounding non-sensitive attributes that also drive homophily.
    \item[Empirical Evaluation] We present proof-of-concept experiments using publicly available data on three different tasks: classification, maximizing information access, and recommendation.
\end{description}

As for any method without direct access to group memberships, our method needs to rely on certain assumptions to work as intended. Central to our approach is the assumption of homophily being informative with respect to the groups for which we want to address fairness gaps. We discuss the limitations of this assumption and failure cases in section~\autoref{sec:discussion}.
\section{Method}
In this section, we introduce our approach to measuring inequality among groups based on pairwise similarities between individuals. We first review known results about homophily from social science and discuss how they can be used to capture similarities of users with respect to unobserved attributes. We then introduce a framework that extends group fairness based on inequality indices to the setting where we only have similarity information about individuals. We call this approach "group-free" group fairness as it does not use any discrete group information to measure group inequality.

\subsection{Communities and homophily in social networks}
\label{sec:homophily}
Most real-world networks are known to exhibit community structure, i.e., high concentrations of edges within certain groups of vertices, and low concentrations between these groups \cite{fortunato2010community,girvan2002community}.
A widely recognized cause for community structure in networks is
\emph{homophily}, which refers to the tendency of individuals to connect at a higher rate with people that share similar characteristics to others.
Homophily is considered one of the basic organizing principles of (social) networks and a robust empirical finding in sociology and social network analysis~\cite{%
  lazarsfeld1954friendship,%
  verbrugge1977structure,%
  mcpherson1987homophily,%
  marsden1988homogeneity,%
  burt1991measuring,%
  mcpherson2001birds,%
  kossinets2009origins%
}.
In social networks, homophily has been observed along many dimensions that are relevant in the context of fairness, i.e., dimensions that correspond to sensitive or protected attributes. This includes, for instance, homophily with regard to race and ethnicity~\cite{%
  marsden1987core,%
  marsden1988homogeneity,%
  ibarra1995race,%
  kalmijn1998intermarriage%
},
sex and gender~\cite{mcpherson1986sex,marsden1987core,mayhew1995sex},
age~\cite{verbrugge1977structure,marsden1987core,burt1991measuring},
religion~\cite{laumann1973bonds,verbrugge1977structure,marsden1988homogeneity},
%
%
education, occupation, and social
class~\cite{%
  verbrugge1977structure,%
  marsden1987core,%
  wright1997class,%
  louch2000personal%
}.
Moreover, homophily has not only been observed along these dimensions but also along a wide range of relationship types, e.g., friendships, marriage, work relations, confiding relations, discussion of topics, as well as online social networks \cite{mcpherson2001birds,thelwall2009homophily,ugander2011anatomy}.
However, many networks exhibit community structure beyond what can be explained by homophily on observed attributes. This can, for instance, be driven by homophily on unobserved attributes, heterophily, self-organization, or structural effects such as preferential attachment and stochastic equivalence~\cite{hoff2007modeling,handcock2007model}.

Importantly, we can formalize such notions of community-formation
again in a similarity-based framework. We will base this framework on latent space models, which are a widely-used and flexible approach for this purpose~\cite{hoff2002latent,hoff2007modeling,handcock2007model}.
In particular, let $i \sim j$ denote a possible link in a network $G = (V,E)$ with $|V|=n,$ and let $\vec{z}_{i} \in \mathbb{R}^{d}$ denote the latent features of node $v_i \in V$.
Furthermore, let the conditional probability of observing graph $G$ given latent features $Z = \{\vec{z}_{i}\}_{i=1}^{n}$ be defined as
\begin{equation}
  P\left(G|Z\right) = \prod_{i \neq j} P(i \sim j \,|\,\text{sim}(\vec{z}_{i}, \vec{z}_{j})) ,
  \label{eq:latent-space-prob}
\end{equation}
where $\text{sim}(\cdot , \cdot)$ denotes a model-specific similarity function between latent feature vectors.
A popular parametrization for Equation~\autoref{eq:latent-space-prob} is \emph{latent distance} models where
$P(i \sim j\,|\,\text{sim}(\vec{z}_{i}, \vec{z}_{j})) \propto -\|\vec{z}_{i} - \vec{z}_{j}\|^{2}$
\cite{hoff2002latent,handcock2007model}. Hence, this model class encodes community-formation processes such as homophily through the similarity of latent features.
Another popular class of network models that fits into this framework is
\emph{stochastic blockmodels} (SBMs)~\cite{holland1983stochastic} which encode community structure through latent group memberships.
In particular, an SBM with $n$ nodes and $k$ communities is defined by a pair of matrices $(Z, B)$, where the membership matrix $Z \in \{0,1\}^{n \times k}$ assigns each node to its unique group (or community); and where $B \in [0,1] ^{k \times k}$ is a symmetric connectivity matrix that specifies the probability that a node in community $c_{i}$ is connected to a node in community $c_{j}$.
Let $(Z, B)$ parameterize an SBM and let $P = ZBZ^{\top}$. An SBM parameterizes
Equation~\autoref{eq:latent-space-prob} then as
$P(i \sim j\, |\, \text{sim}(\vec{z}_{i}, \vec{z}_{j})) = p_{ij} = \vec{z}_{i}^{\top} B \vec{z}_{j}$.
The adjacency matrix $A$ of an SBM is generated from $P$ via
${a_{ij} = \text{Bernoulli}(p_{ij}})$ for all $i \neq j$ and $a_{ij} = 0$ otherwise.
Furthermore, the expected adjacency matrix is given by
$\mathbb{E}[A] = P - \text{diag}(P)$.
Stochastic blockmodels can encode network formation processes such as homophily and stochastic equivalence and are one of the most widely used models to study community structure in networks.

\subsection{Group-free group fairness}\label{sec:group-free}
\subsubsection{Background} 
Group fairness aims to minimize the inequality of outcomes among population groups. The traditional setting assumes that these groups are defined by sensitive attributes and form a partition of the population. 

In particular, let us consider a population of size $n$, which we identify with $\{1, \dots,n\}$ and let $x\in\Re^n$ denote the outcome of interest from algorithmic decisions. Let $\mathcal{G}$ be a partition of the population into groups, i.e., $\{1, \ldots,n\} = \cup_{g\in \mathcal{G}} g$ and $g\cap g'=\emptyset$ for every $g\neq g'$ in $\mathcal{G}$. Let $\Delta_b(x)\in\Re_+$ denote a measure of the inequality between the groups in $\mathcal{G}$. Given some parameters $\theta$, a loss function $L$ such as accuracy on a labeled dataset, denoting $x_\theta$ the outcomes induced by $\theta$, the goal of group fairness in prediction problems
can then be formalized as~\citep{speicher2018unified}
\[
\min_\theta \Delta_b(x_\theta) \quad \text{s.t.}\ L(\theta) \leq \delta
\]

These approaches rely on resolving two questions. First, how do we measure inequality in general, and second how do we measure inequalities \emph{between groups} of individuals? Following \citet{speicher2018unified} we follow the extensive literature on measuring inequality in economics \citep{cowell2011measuring} as the starting point of our approach. 

Following axiomatic approaches to design inequality measures \citep{allison1978measures,shorrocks1980class,shorrocks1984inequality,cowell1981inequality}, a function $F:\bigcup_{n\geq 1}\Re_{++}^n\rightarrow \Re_+$ should satisfy certain axioms to be an inequality measure. Some of the most salient are \citep{allison1978measures}: 1) $F$ is symmetric; i.e., invariant by a permutation of the inputs. 2) $F$ is non-negative and $0$ only if $x$ is constant. 3) $F$ is invariant by scale: for every $\alpha>0$, $F(\alpha x) = F(x)$. 4) $F$ is replication invariant, i.e., $F$ stays constant if you repeat the inputs.\footnote{That is, for every integers $n$ and $r>0$, and every $x\in\Re_{++}^n$, $F_{nr}(x^{\bigotimes r}) = F(x)$ where $x^{\bigotimes r} = (x, \dots , x)$ where $x$ is repeated $r$ times. }

While the axioms above clarify basic conditions, the fundamental property of inequality measures is 5) each $F$ satisfies the principle of transfers \citep{dalton1920measurement}, informally meaning that inequality is reduced by taking from a well-off individual and giving it to a worse-off individual.\footnote{Formally, this is equivalent to say that each $F$ is Schur-convex: let $x_{(i)}$ be the $i$-th largest entry of the vector $\Vec{x}$. Then, for two vectors $x, y \in \Re^n$ if $\sum_{i=1}^d y_{(i)} \geq \sum_{i=1}^d x_{(i)} \forall d\in\{1, 2, \cdots, n\}$, then $F(x) \leq F(y)$.}
The final axiom is about additive decomposition \citep{shorrocks1980class} which allows us to decompose the total inequality in a population into within-group inequality $(\Delta_w)$ and between-group inequality $(\Delta_b)$. Formally, let $x\in\Re_{++}^n$, and $\mathcal{G}$ a partition of $\{1,\ldots,n\}$. Denote by
\begin{align}
\begin{split}
\forall g\in \mathcal{G}, x_g=(x_i)_{i\in g} \quad n_g = \card{g} \quad \mu_g =\frac{1}{n_g} \sum_{i\in g} x_i\\
\mu=\frac{1}{n} \sum_{i=1}^n x_i \quad \bar{x}\in\Re_{++}^n \text{ such that } \bar{x}_i = \mu_{g_i}
\end{split}
\end{align}
where $g_i$ above is the group label of individual $i$. Then $F$ is \emph{additively decomposable} if there exists a function $q:\Re_{++}^2\rightarrow\Re_{++}$ such that, for every partition $\mathcal{G}$, we have:
\begin{align}\label{ref:decomposability}
\begin{split}
F(x) &= \Delta_w(x) +\Delta_b(x)\\
\text{where } \Delta_w(x)&=\sum_{g\in \mathcal{G}}\frac{q(\mu_g) n_g}{q(\mu) n}F(x_g) \text{  and } \Delta_b(x)=F(\bar{x}).
\end{split}
\end{align}
The total inequality can thus be decomposed into within-group inequality ($\Delta_w$) and between-group inequality ($\Delta_b$).
Well-known examples of additively decomposable measures are generalized entropy indices \citep{shorrocks1980class} defined as 
\begin{align}
F(x) = \frac{1}{n\alpha(\alpha-1)}\sum_{i=1}^n\left[\left(\frac{x_i}{\mu}\right)^\alpha-1\right] \text{ where $\alpha\not \in \{0, 1\}$},
\end{align}
with limiting equations for $\alpha=0$ and $\alpha=1$. For $\alpha=2$, $F$ is half the squared coefficient of variation, i.e., $F(x)=\frac{\mathrm{var}(x)}{2\mu^2}$, in which case it can be easily verified by the law of total variance that $q(\mu)=\mu^2$ in \eqref{ref:decomposability}. 
Additively decomposable measures define between-group inequality as the contribution of total inequality attributed to the difference of average outcome between groups and is thus a natural basis for group fairness approaches. However, existing works require groups to partition the data, i.e., see the dependence of $\Delta_w$ and $\Delta_b$ on $\mathcal{G}$. 
In the remainder of this section, we describe the novelty of our group-free group fairness approach, in which we define a notion of group fairness by extending the notion of additive decomposability from partitions of the population to similarity kernels. Note that additive decomposability and generalized entropy indices were used previously in the context of algorithmic fairness by \citet{speicher2018unified}, but our goal is different from theirs. Their motivation was to step away from pure group fairness to also account for within-group inequalities. In our work, additive decomposability is used to ground our extension of between-group inequalities from partitions to similarity kernels in the decomposition of total inequality.

\subsubsection{Weighted inequality measures} Notice that given a generalized entropy $F$ and groups $\mathcal{G}$, the between group inequality is equal to: 
\begin{align}
\begin{split}
\Delta_b(x) = F(\bar{x}) &= \frac{1}{n \alpha(\alpha-1)}\sum_{i=1}^n\big((\bar{x}_i/\mu)^\alpha-1)\\
&= \sum_{g\in \mathcal{G}} \frac{n_g}{n\alpha(\alpha-1)} \big((\mu_g/\mu)^\alpha-1)    
\end{split}
\end{align}
It is then useful to consider an extension of $F$ that admits a weight vector $w\in \Re_{++}^n$: 
\begin{align}\label{eq:weighted_entropy}
F(x, w) = \frac{1}{\alpha(\alpha-1)}\sum_{i=1}^n\frac{w_i}{||w||_1}\Big(\big(\frac{x_i}{\mu}\big)^\alpha-1\Big)
\end{align}
so that denoting $\bar{\mu} = (\mu_g)_{g\in \mathcal{G}}$ and $\bar{w} =(n_g/n)_{g\in \mathcal{G}}$, we have $\Delta_b(x) = F(\bar{\mu}, \bar{w})$. 

Note that this weighted version makes it clear that groups are weighted by their size. This group weighting is inherent to the decomposition approach to defining between-group inequality, since each \emph{individual} has the same weight in the total inequality $F(x)$, as also discussed by \citet{speicher2018unified}. Giving proportionally more weight to small groups could also be reasonable in some contexts. It is compatible with the decomposition approach by reweighting \emph{individuals} in the initial inequality, effectively giving more weight to individuals from smaller groups. We do not explore this further in our experiments.

\subsubsection{Group-free group fairness}\label{sec:kernel}
The main challenge of defining a group-free inequality metric is to replace partition ($\in\mathcal{G}$) and averaging ($\bar{x}$) functions as used in Equation \autoref{ref:decomposability} without relying on group information.
Intuitively, we will use pairwise similarities to replace averages that are partitioned by groups with averages over the entire population but weighted by pairwise similarities. Hence, for an individual $i$, individuals that are similar to $i$ (i.e., closely connected in the network and likely members of similar groups) are weighted stronger in the average than dissimilar individuals. We recover the discrete group setting in the special case where individuals in a group are considered completely similar (weight = 1) while individuals of different groups are completely dissimilar (weight = 0), and in this case, our approach yields the standard between-group inequality. Note that the individual fairness framework of \citet{dwork2012fairness} also uses a similarity between individuals to reduce the inequality of outcomes. However, apart from the usage of similarity metrics, our approaches are entirely different. In individual fairness, similarity is supposed to capture merit and similar individuals should receive similar outcomes irrespective of their group. In contrast, in our work similarity is supposed to reflect group structure and similar outcomes should be distributed across these groups, i.e., across \emph{dissimilar} individuals.

Formally, we define a kernel from a similarity matrix as follows. Let $S \in \Re_+^{n \times n}$ be the pairwise similarity matrix such that $S_{ij} = \text{sim}(i, j)$. 
Furthermore, let $K \in [0,1]^{n \times n}$ be a column-normalized kernel of $S$, i.e., 
$    K = S D^{-1}$
where $D$ is a diagonal matrix with entries $d_{jj} = \sum_i \text{sim}(i, j)$. 
Column normalization is necessary for preserving additive decomposability since it corresponds to the total weight of an individual within the averaging performed by the kernel. Given a kernel $K\in[0,1]^{n\times n}$ and an outcomes vector $\datavec\in\Re^n$, the averaging operator $A$ is defined as
\begin{equation}
   A\left(K, \datavec\right) = \frac{K \datavec}{K\Vec{1}}
    \label{eq:weighted-neigh}
\end{equation}
In the above, the division is applied element-wise and $\Vec{1}\in\Re^n$ is the all-1 vector. Next, we define our measure of group-free group fairness over these smoothed values based on a weighted generalized entropy \eqref{eq:weighted_entropy} by: 
\begin{equation}\label{eqn:group-free-def}
    \Delta_b(\datavec) = F\left(A(K, \datavec), K\Vec{1}\right)
\end{equation}

Our measure of group-free group fairness generalizes the discrete setting: When sensitive attributes are available and define a partition into groups $g \in \mathcal{G}$, we define the \emph{``ground-truth" kernel} as the kernel $K^*$ where $K^*_{ij} = \frac{1}{\lvert g \lvert}$ if $i, j\in g$, and $K^*_{ij} = 0$ otherwise. We also define the \emph{``ground-truth" inequality} as $\Delta^0_b$ calculated with the ground-truth kernel. When using such a kernel, it is immediate to check that the definitions of $\Delta_b$ in Equation~\autoref{eqn:group-free-def} and Equation~\autoref{ref:decomposability} match.

More generally, the definition of \eqref{eqn:group-free-def} stems from the following generalization of additive decomposability, the proof of which can be found in Appendix \ref{sec:decomposability}:
\begin{proposition}[Generalized additive decomposability]\label{prop:additive-decomposability}
    For any generalized entropy $F$, using $q$ defined in Equation \autoref{ref:decomposability}, for every column-normalized $K\in\Re_+^{n\times n}$, i.e., such that $\forall j, \sum_{i=1}^n K_{ij}=1$, for every $\datavec\in\Re_{++}^n$, we have
    \begin{equation}\label{eqn:prop-decomposability}
        F(\datavec) = \left[\sum_{i=1}^n \frac{q\big(A(K, \datavec)_i\big) K_i^\top\Vec{1}}{q(\mu) n} F\left(\datavec, K_i\right)\right] + F\left(A\left(K, \datavec\right), K\Vec{1}\right)
    \end{equation}
\end{proposition}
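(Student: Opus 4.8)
The statement is an algebraic identity, so my plan is to expand the three quantities appearing in~\eqref{eqn:prop-decomposability} --- the total inequality $F(\datavec)$, the within-kernel sum, and the between-kernel term $F\!\left(A(K,\datavec),K\vec{1}\right)$ --- into a common normal form and check that the spurious terms cancel. I would first treat a generic exponent $\alpha\notin\{0,1\}$, for which $q(\mu)=\mu^{\alpha}$ (up to a multiplicative constant that cancels in the ratio $q(\mu_g)/q(\mu)$), and then recover $\alpha\in\{0,1\}$ either by continuity of the generalized entropy in $\alpha$ or by rerunning the same short computation with the $t\ln t$ and $-\ln t$ forms, using $q(\mu)=\mu$ and $q(\mu)=1$ respectively.

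The first step is bookkeeping about the kernel. Writing $s_i=(K\vec{1})_i=\sum_j K_{ij}=K_i^\top\vec{1}$ and $a_i=A(K,\datavec)_i=\tfrac{1}{s_i}\sum_j K_{ij}\,y_j$, column-normalization $\big(\sum_i K_{ij}=1\ \text{for all }j\big)$ gives three facts I will use repeatedly: (i) $\sum_i s_i=\sum_j\sum_i K_{ij}=n$, so $\|K\vec{1}\|_1=n$; (ii) $\vec{1}^\top K=\vec{1}^\top$, whence the $(K\vec{1})$-weighted mean of $A(K,\datavec)$ equals $\tfrac{\vec{1}^\top K\datavec}{\vec{1}^\top K\vec{1}}=\tfrac{\vec{1}^\top\datavec}{n}=\mu$; and (iii) by construction the $K_i$-weighted mean of $\datavec$ equals $a_i$. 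Consequently, in the weighted entropy~\eqref{eq:weighted_entropy} the appropriate reference mean is $a_i$ inside $F(\datavec,K_i)$ and $\mu$ inside $F\!\left(A(K,\datavec),K\vec{1}\right)$; this is exactly the point at which column-normalization is needed, as noted in the main text when introducing the averaging operator.

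The second step is the expansion. Setting $c=\alpha(\alpha-1)$, we have $F(\datavec)=\tfrac{1}{nc}\!\left(\mu^{-\alpha}\sum_j y_j^{\alpha}-n\right)$, and with the reference means above,
\[
F(\datavec,K_i)=\frac{1}{c\,s_i}\!\left(a_i^{-\alpha}\sum_j K_{ij}y_j^{\alpha}-s_i\right), \qquad F\!\left(A(K,\datavec),K\vec{1}\right)=\frac{1}{nc}\!\left(\mu^{-\alpha}\sum_i s_i a_i^{\alpha}-n\right).
\]
Multiplying $F(\datavec,K_i)$ by $\tfrac{q(a_i)s_i}{q(\mu)n}=\tfrac{a_i^{\alpha}s_i}{\mu^{\alpha}n}$ and summing over $i$, the $s_i$ factors cancel and $a_i^{\alpha}$ distributes, giving $\tfrac{1}{\mu^{\alpha}nc}\sum_i\!\big(\sum_j K_{ij}y_j^{\alpha}-a_i^{\alpha}s_i\big)$; then $\sum_i K_{ij}=1$ collapses $\sum_i\sum_j K_{ij}y_j^{\alpha}$ to $\sum_j y_j^{\alpha}$, so the within-kernel bracket equals $\tfrac{1}{\mu^{\alpha}nc}\!\left(\sum_j y_j^{\alpha}-\sum_i s_i a_i^{\alpha}\right)$.

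Adding this to $F\!\left(A(K,\datavec),K\vec{1}\right)$, the two copies of $\mu^{-\alpha}\sum_i s_i a_i^{\alpha}$ cancel and what remains is $\tfrac{1}{nc}\!\left(\mu^{-\alpha}\sum_j y_j^{\alpha}-n\right)=F(\datavec)$, which is the claim. I do not expect a genuine obstacle here: once the reference means are pinned down the calculation is only a few lines, and the only delicate points are (a) correctly identifying those reference means --- i.e., making use of column-normalization --- and (b) the boundary exponents $\alpha\in\{0,1\}$, where the cleanest route to the same cancellation is a limiting argument in $\alpha$ rather than a separate logarithmic computation.
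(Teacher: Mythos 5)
Your proof is correct, but it takes a genuinely different route from the paper's. The paper argues in two directions: it first checks that the identity specializes to classical additive decomposability for the ground-truth kernel, and then derives the identity from the classical axioms by interpreting an integer-valued kernel's entries as replication counts, invoking replication invariance to build an explicit partition, and extending to rational and then real-valued kernels via scale invariance and a (uniform) continuity argument relegated to a footnote. You instead verify the identity by direct expansion of the generalized entropy: writing $s_i=K_i^\top\vec{1}$ and $a_i=A(K,\datavec)_i$, you use column normalization only through $\sum_i s_i=n$ and $\sum_i K_{ij}=1$, and the cross terms $\mu^{-\alpha}\sum_i s_i a_i^{\alpha}$ cancel exactly; I checked the algebra and it is right. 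Your computation is shorter, avoids the delicate rational-to-real limiting step entirely, and has the additional virtue of making explicit which reference means must be used inside the weighted entropies (the $K_i$-weighted mean $a_i$ in $F(\datavec,K_i)$ and the global mean $\mu$ in the between term) --- a point the paper's Equation~\eqref{eq:weighted_entropy} leaves implicit and which is precisely where column normalization enters. What the paper's argument buys in exchange is conceptual: it shows the generalized decomposition follows from the axioms (additive decomposability, replication and scale invariance, continuity) rather than from the specific functional form, which is the framing the authors want for motivating the definition. Your treatment of the boundary exponents $\alpha\in\{0,1\}$ by continuity in $\alpha$ is acceptable, though a one-line rerun with the logarithmic forms would be equally quick.
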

Note that as stated before, when the kernel corresponds to a partition, then the decomposition in Prop. \ref{prop:additive-decomposability} is the same as the original decomposition of Equation \autoref{ref:decomposability}.

\subsection{Inferring Pairwise Similarities based on Homophily in Networks}\label{sec:kernel-construction}
In subsections \ref{sec:homophily} and \ref{sec:group-free}, we discussed how pairwise similarities based on homophily can be used to define group-free group fairness. In the following, we will discuss how these similarities can be inferred from a given network. 
In particular, given a graph $G = (V, E)$ with $|V|=n$, we are interested in inferring pairwise similarities ${\text{sim} : V \times V \to \mathbb{R}}$ that capture the homophily structure of the network.
To infer $\text{sim}(\cdot, \cdot)$ from $G$, we can use a variety of models ranging from latent distance and eigenmodels as discussed previously~\cite{hoff2002latent,hoff2007modeling}, 
to the stochastic blockmodel via spectral clustering~\citep{newman2006finding,sbm-recovery,amini2013pseudo,zhang2018understanding}, to link prediction models~\citep{hoff2007modeling,grover2016node2vec,nickel2011three} and graph
neural networks~\citep{kipf2016semi,hamilton2017inductive}.
Unless noted otherwise, we will employ Laplacian Eigenmaps, as it is a simple and theoretically well-founded method to infer node similarities for networks with a latent stochastic block-model structure \citep{sbm-recovery}. However, we note explicitly that we do not propose a single best method to infer $\text{sim}(\cdot, \cdot)$ from $G$, as the best approach will be network dependent.

For a graph $G$, let $Z \in \Re^{n \times d}$ be the Laplacian Eigenmap embedding of $G$ and $\vec{z}_i$ be the $i^{\text{th}}$ row of $Z$, such that the columns of $Z$ are the eigenvectors corresponding to the $d$ smallest non-zero eigenvalues of the degree-normalized Laplacian of $G$ \citep{lapeigenmap}. Intuitively, Laplacian Eigenmaps penalizes nodes connected with edge weight $w_{ij}$ for being embedded far apart, i.e.:
\begin{equation}
    Z = \argmin_{Z} \sum_{i,j}^n w_{ij} \|\vec{z}_i - \vec{z}_j\|^2 ~~\text{s.t.}~~ Z^TZ = I 
\end{equation}
It is well known that Laplacian Eigenmap embeddings are well-suited to recover homophily-based similarity in the stochastic block model, e.g., they can be used in combination with spectral clustering for exact recovery of the planted community structure under conditions that match
the information-theoretic limits~\citep{sbm-recovery, deng2020strong}.

Given $Z$, we compute $S \in \mathbb{R}^{n \times n}$ via the cosine similarity of the embeddings, i.e., by setting 
\[
S_{ij} = \text{sim}(i,j) = \frac{\langle \vec{z}_i, \vec{z_j} \rangle}{\|\vec{z_i}\| \|\vec{z}_j\|}
\]
Finally, we scale the similarity scores to be in $[0, 1]$ and column-normalize the kernel matrix:
\begin{align}\label{eqn:kernel-def}
        K = n S_u D^{-1} \quad \text{where} \quad
        S_{u} = \frac{S - \min S}{\max S - \min S}
\end{align}

See Figure~\autoref{fig:sbm-kernel} for an example of inferring $K$ from $G$ with Laplacian Eigenmaps for two synthetic SBMs and its behavior for misspecified embedding dimensions.
\begin{figure}
     \centering
     \begin{subfigure}[b]{0.49\textwidth}
         \centering
         \includegraphics[width=\textwidth]{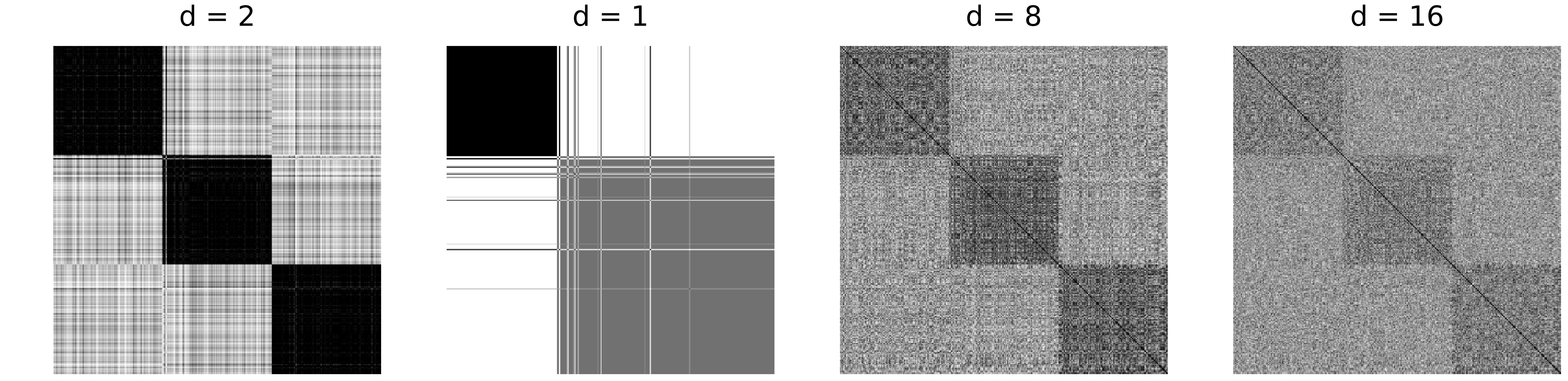}
         \caption{Dimensions exceed groups (groups = 3)}
         \label{fig:kernel_over}
     \end{subfigure}
     \hfill
     \begin{subfigure}[b]{0.49\textwidth}
         \centering
         \includegraphics[width=\textwidth]{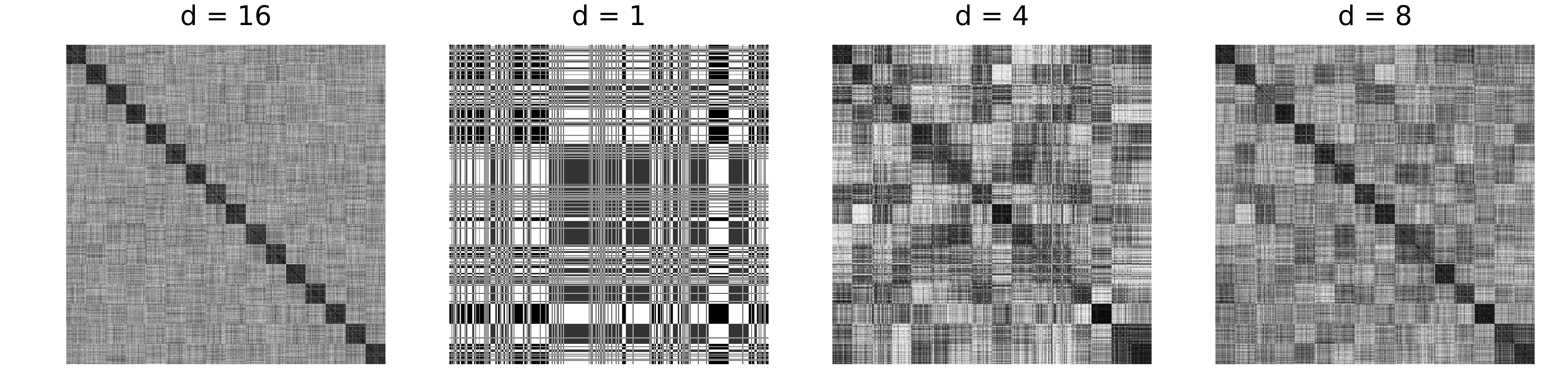}
         \caption{Groups exceed dimensions (groups = 16)}
         \label{fig:kernel_under}
     \end{subfigure}
    \caption[]{
    \textbf{Kernel Inference} Examples of inferred kernels $K$ (Eq. \ref{eqn:kernel-def}) for two SBMs with varying embedding dimensions $d$. When $d$ is specified adequately, the kernel recovers the correct latent block structure. When $d$ is (a) greater than or (b) less than the number of groups, recovery degrades gracefully. In both examples, $G$ is a sample of an SBM with blocks of size $n=100$, intra-group connection probability $p=0.2$, and inter-group connection probability $q=0.05$. Even when $d=16$ is largely overspecified in (a), nodes in the same block share visibly greater similarity. In (b), the block structure is largely recovered with half as many dimensions ($d=8$) as blocks.}
    \label{fig:sbm-kernel}
\end{figure}

It is worth noting that when using networks for the purpose of group fairness, a naive approach could be to simply define discrete groups via community detection and apply existing group fairness measures. 
However, this approach would clearly infer group memberships and possibly violate privacy expectations of individuals. 
In contrast, our approach does not infer any form of group membership not even in intermediate steps. Moreover, community detection can fall short for common network structures that arise in social settings such as core-periphery structures and hierarchical communities. In addition, the assignment to discrete groups is error-prone and sensitive to hyperparameters. In subsection~\autoref{sec:influence-max}, we supplement these conceptual concerns with practical pitfalls we observed when applying community detection for group fairness.

\subsection{Analysis of non-sensitive homophilous attributes}
Our model assumes that every individual in the population possesses latent sensitive attributes, which are sources of homophily. However, it is likely that non-sensitive attributes contribute also to homophily in a network. In these cases, our kernel will recover both sensitive and confounding attributes. In this section, we bound the impact of these confounding attributes under specific assumptions. 

We analyze the specific case in which a single sensitive attribute partitions the population into two classes of equal sizes. Further, the outcome vector $\datavec\in \{0, 1\}^n$ has an equal number of positive and negative labels where $\sum_{i=1}^n \datavec_i = \frac{n}{2}$. Without loss of generality, one of the sensitive classes has $\mu_1 = \frac{n}{2}(1/2 + \epsilon)$ positive labels and another has $\mu_2 = \frac{n}{2}(1/2 - \epsilon)$, where $\epsilon\in [0, 1/2]$ is an inequality parameter. The ground truth inequality is:
\begin{align}\label{eqn:ground-truth-inequality}
\begin{split}
    \Delta_b^0 = F(A(K^*, \datavec), K^*\vec{1})
    =& F\left(\left(\mu_1^{\otimes n/2}, \mu_2^{\otimes n/2}\right), \frac{n}{2}\vec{1}\right)\\ 
    &= \left(\frac{\mu_1 - \mu_2}{\mu_1 + \mu_2}\right)^2\\ 
    &= 4\epsilon^2
\end{split}
\end{align}
Next, we consider confounding attributes under the constraint that each confounding attribute partitions the population into groups of equal size. This restriction allows for a more tractable analysis of the column-normalized kernel matrix $K$. With the above setup and restrictions, we bound the impact of confounding attributes in Proposition \ref{prop:inequality-bounds}, which is proved in Appendix \ref{sec:proof-bounds}.

\begin{proposition}[Inequality Bounds] \label{prop:inequality-bounds}
        In the above setup, where the ground-truth inequality is $\Delta^0_b = 4\epsilon^2$, let us parameterize the kernel as the sum of a kernel $K_s$ corresponding to the sensitive attribute and kernels corresponding to a set of confounding attributes $C$:
        \begin{equation}
            K = K_s + \sum_{C}K_c
        \end{equation}
        The kernel $K_s$ is defined such that $K_s(i, j) = p$ if $i, j$ are in the same sensitive group, otherwise $K_s(i,j) = 0$. Similarly, for a confounding kernel $K_c$, $K_c(i,j) > 0$ if $i$ and $j$ share the same value for confounding attribute $c$. We assume the sum of the absolute values of the confounding kernels is bounded by $\lvert \sum_C K_c\lvert_1 \leq qn^2/2$. Further, each confounding attribute partitions the population into groups of equal size and weight i.e. $K_c$ is a node permutation of a block-diagonal matrix and $K_c\Vec{1} = \lambda_c\Vec{1}$ for a constant $\lambda_c$. Then, the group-free inequality value that our method returns, $\Delta'_b$, is bounded as:
        \begin{equation}
            \Delta'_b \in \left[\left(\frac{p}{p + q}\right)^2 \Delta_b^0, \Delta_b^0 + \left(\frac{q}{p + q}\right)^2\left(1 - \Delta_b^0\right)\right]
        \end{equation}
\end{proposition}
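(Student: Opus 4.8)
The plan is to turn \eqref{eqn:prop-decomposability}'s ingredients into a statement about the variance of a convex combination of two explicit vectors, after which both bounds reduce to controlling one cross-term. First I would use the two structural hypotheses — the sensitive groups have equal size $n/2$, and $K_c\vec 1=\lambda_c\vec 1$ for every confounder — to observe that $K\vec 1=W\vec 1$ is \emph{constant}, with $W=\tfrac{pn}{2}+\Lambda$ and $\Lambda:=\sum_C\lambda_c$; summing all entries and using the constant row sums, the bound $\lvert\sum_C K_c\rvert_1\le qn^2/2$ then gives $\Lambda\le qn/2$. Since $K\vec 1$ is constant, $\Delta'_b=F\big(A(K,\datavec),K\vec 1\big)$ is the \emph{uniformly}-weighted generalized entropy of $A(K,\datavec)$; exactly as in \eqref{eqn:ground-truth-inequality}, for $\alpha=2$ this equals $\mathrm{var}\big(A(K,\datavec)\big)/\mu^2$ with $\mu=\tfrac12$, so it suffices to bound $4\,\mathrm{var}\big(A(K,\datavec)\big)$ and, correspondingly, $\Delta_b^0=4\epsilon^2$.

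Next I would split the smoothed outcome. Because $K=K_s+\sum_C K_c$ with $(K_s\vec 1)_i=\tfrac{pn}{2}$ and $(K\vec 1)_i=W$ both constant, the averaging operator splits linearly:
\[
A(K,\datavec)\;=\;s\,X+(1-s)\,Y,\qquad s:=\frac{pn/2}{W}\in\Big[\tfrac{p}{p+q},\,1\Big],
\]
where $X:=A(K_s,\datavec)$ is the ground-truth-smoothed vector, equal to $\tfrac12+\epsilon$ on the majority sensitive class $P$ and $\tfrac12-\epsilon$ on $Q$ (so $4\,\mathrm{var}(X)=\Delta_b^0$), and $Y:=A(\sum_C K_c,\datavec)$ is the confounder-smoothed vector; as a ratio of non-negative averages of $\datavec\in\{0,1\}^n$ it lies in $[0,1]^n$, and from $\tfrac12=\mu=\tfrac{s}{2}+(1-s)\,\mathrm{mean}(Y)$ we get $\mathrm{mean}(Y)=\tfrac12$.

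From this split, both bounds follow by applying the law of total variance with respect to the sensitive partition $\{P,Q\}$ (both of size $n/2$):
\[
\mathrm{var}\big(A(K,\datavec)\big)=\tfrac14\big(\bar A_P-\bar A_Q\big)^2+\tfrac12\,\mathrm{var}(A\!\mid\!P)+\tfrac12\,\mathrm{var}(A\!\mid\!Q),
\]
where $\bar A_P-\bar A_Q=2s\epsilon+(1-s)(\bar Y_P-\bar Y_Q)$ and $\mathrm{var}(A\!\mid\!P)=(1-s)^2\mathrm{var}(Y\!\mid\!P)\le(1-s)^2\bar Y_P(1-\bar Y_P)$ (symmetrically on $Q$). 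For the lower bound I keep only the between-group term and use the confounder structure to argue that the confounder contribution cannot reverse the sign of the sensitive gap, so $\lvert\bar A_P-\bar A_Q\rvert\ge 2s\epsilon$ and $\Delta'_b\ge 4s^2\epsilon^2=s^2\Delta_b^0\ge\big(\tfrac{p}{p+q}\big)^2\Delta_b^0$. For the upper bound I bound $\bar Y_P(1-\bar Y_P)$ by its range, control $\lvert\bar Y_P-\bar Y_Q\rvert$ via the confounder structure, and add the two contributions — the $(\bar Y_P-\tfrac12)^2$ terms cancel — obtaining an expression that, after substituting $1-s\le\tfrac{q}{p+q}$ and maximizing over $s$ (it is convex, so the maximum is at an endpoint, and both endpoint values are $\le\Delta_b^0+(\tfrac{q}{p+q})^2(1-\Delta_b^0)$), yields the claimed interval.

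The main obstacle is precisely the step I glossed over: bounding how far the confounder-smoothed vector $Y$ can distort the two sensitive-group means $\bar A_P,\bar A_Q$ and the within-group variances — equivalently, bounding $\bar Y_P,\bar Y_Q$ and the correlation between $Y$ and the two-valued vector $X$. A naive Cauchy--Schwarz estimate of $\mathrm{cov}(X,Y)$ is not tight enough to reach the stated constant; the argument must instead use all three assumptions on the $K_c$ together — node permutations of block-diagonal matrices, \emph{equal}-size blocks, and constant row/column sums $\lambda_c$ — to show that the per-block confounder averaging stays suitably balanced against the sensitive partition. Everything else (the algebra in the two paragraphs above) is routine.
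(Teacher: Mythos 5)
Your reduction in the first two paragraphs matches the paper's own first step: the constant row/column sums let you drop the weights, and the averaging operator splits as $A(K,\datavec)=\alpha A(K_s,\datavec)+(1-\alpha)A(K_C,\datavec)$ with $\alpha=p/(p+q)$ in the extremal case. The gap is in how you then bound $F(\alpha\datavec_s+(1-\alpha)\datavec_C)$. Your lower bound rests on two claims: that the confounder ``cannot reverse the sign of the sensitive gap'' (i.e.\ $\bar{Y}_P\ge\bar{Y}_Q$), and that the between-group term of the law of total variance suffices. Both fail. Counterexample: $n=16$, $\epsilon=1/8$, $P=\{1,\dots,8\}$, positives $\{1,\dots,5\}\cup\{9,10,11\}$, and a single confounder with two equal blocks $B_1=\{3,\dots,8,12,13\}$ and $B_2=\{1,2,9,10,11,14,15,16\}$ (valid: equal sizes, block-constant, $K_c\Vec{1}=\lambda_c\Vec{1}$). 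Then $Y=A(K_C,\datavec)$ equals $3/8$ on $B_1$ and $5/8$ on $B_2$, so $\bar{Y}_P=7/16<9/16=\bar{Y}_Q$: the confounder gap has the opposite sign to the sensitive gap, and $|\bar{A}_P-\bar{A}_Q|<2s\epsilon$. Worse, with $p=q$ one computes $\Delta'_b=1/64$, which \emph{equals} the claimed lower bound $(p/(p+q))^2\Delta_b^0$, while the between-group term alone contributes only $1/256$; the remaining $3/256$ lives in the within-group variance you discard. So no argument that keeps only the between-group term can establish the lower bound, and the quantity you defer to ``the main obstacle'' is not a technicality but the entire content of the proposition.

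The paper closes this gap with a variational argument rather than a term-by-term one. Its Lemma \ref{lemma:maxmin} uses the transfer principle (Schur-convexity) to show that, over all admissible confounders, $F(\alpha\datavec_s+(1-\alpha)\datavec_C)$ is minimized at $\datavec_C=\overline{\datavec}$ (the confounders wash everything out to the global mean) and maximized at $\datavec_C=\datavec$ (the confounders exactly identify the labels). The two endpoints are then evaluated in closed form: the minimizer via the ``blending'' Lemma \ref{lemma:blend}, giving $(p/(p+q))^2\Delta_b^0$, and the maximizer by a direct computation of the weighted normalized variance of a four-valued vector, giving $\Delta_b^0+(q/(p+q))^2(1-\Delta_b^0)$. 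To complete your route you would end up having to prove this extremality statement anyway, so the law-of-total-variance decomposition buys you nothing; I would restructure the proof around identifying the extremal $\datavec_C$ first and only then computing the two closed forms.
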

The above bounds are functions of $p, q, \epsilon$. To be more concrete, we provide visualizations of the bounds for specific instantiations of the three variables in Figure \ref{fig:inequality-bounds}.
\begin{figure}
    \centering
    \includegraphics[width=\linewidth]{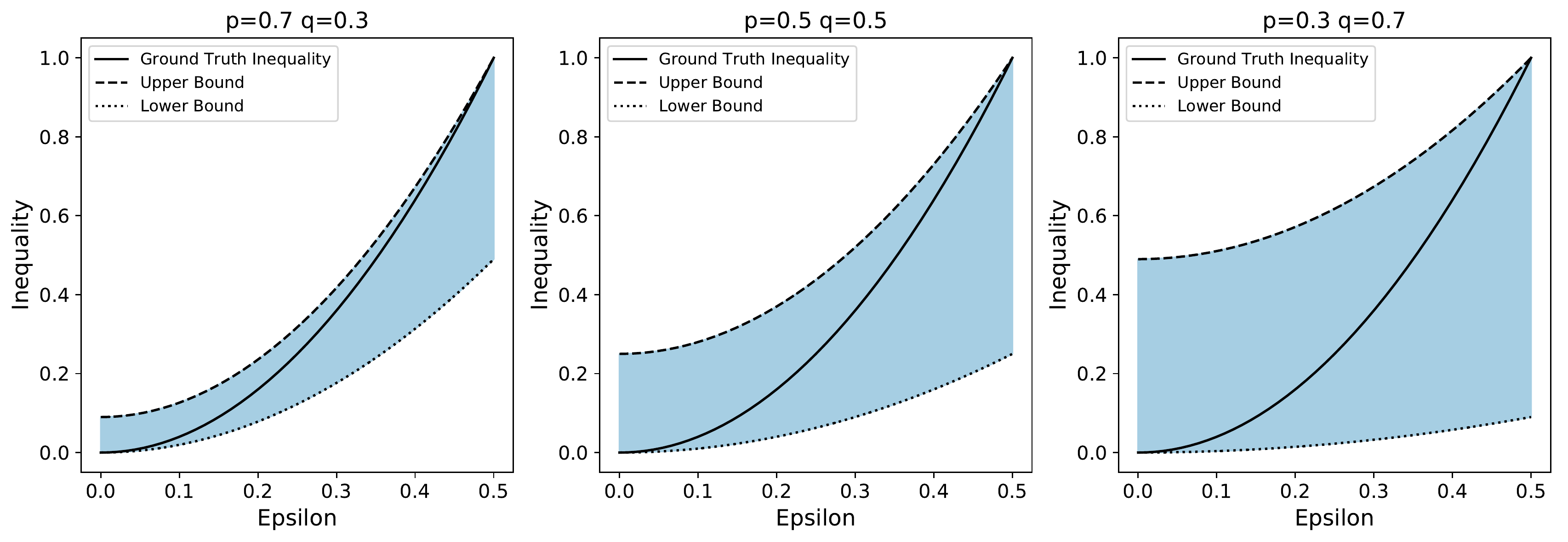}
    \caption[]{\textbf{Inequality Bounds} Instantiations of upper and lower bounds for three pairs of $p, q$. From left to right: sensitive attribute similarities dominate the confounding similarities; similarities are of equal value; and, confounding similarities dominate the sensitive ones. Epsilon (x-axis) denotes the difference in the prevalence of positive labels between the two sensitive-attribute groups. ``Ground Truth Inequality" denotes group inequality as measured using the sensitive attribute, and our similarity-based measure returns a value in the shaded region. As the strength of the confounding attributes increases, the bounds become wider.}
    \label{fig:inequality-bounds}
\end{figure}
\section{Applications}
Our proposed group-free group fairness measure is adaptable for a variety of tasks as an objective or a constraint. To demonstrate its versatility, we apply the measure to three diverse tasks: node classification, maximizing information access, and recommender systems. In all experiments, we leverage datasets that provide both a social network and individual sensitive attribute labels, henceforth referred to as ``ground-truth" labels. We use the ground-truth labels to show that our kernel-based approach, which utilizes only the network, does indeed lower inequality among these classes. Further, for the maximizing information access and recommender system tasks, we compare our results against the baseline of inferring group memberships via community detection instead of our group-free approach.

\subsection{Data}
Table \ref{tab:datasets} lists the four datasets used in our experiments. Each provides a social network and node-level ground-truth labels. The \polblogs{} network consists of political blogs active during the 2004 U.S. presidential election where the label is the political affiliation of the blog (liberal or conservative), and an edge connects two blogs if one blog includes a hyperlink to the other \cite{polblogs}. \emaileu{} is an email network among members of a large European research institution where edges connect the sender and recipient of an email, and members are labeled according to their department within the research institution \cite{email}. The \lastfm{} network includes data from the Last.fm music service from users in Asian countries \cite{feather}. It includes a social network of users where edges are mutual follows, as well as the list of artists listened to by the users and the country of the users. We also use the \deezer{} dataset \citep{feather} which includes similar preference and network data as Lastfm-Asia, but from the Deezer music streaming service.

Preprocessing details are available in Appendices~\autoref{sec:app-exps} and \ref{app:ranking}. The statistics shown in Table \ref{tab:datasets} are after pre-processing.

\begin{table}[b]
    \small
    \centering 
    \caption[]{\textbf{Network datasets with node-level sensitive attributes}. Statistics for the number of nodes ($\lvert V\lvert$), edges ($\lvert E\lvert$), and groups ($\lvert \mathcal{G}\lvert$) are after pre-processing. Assortativity ($r$) is calculated with respect to the sensitive attribute (higher values suggest greater homophily).}
    \label{tab:datasets}
    \begin{tabular}{llrrrr}
        \toprule
        \bf Dataset & \bf Sensitive \bf Attr. & $\lvert V\lvert$ & $\lvert E\lvert$ & \bf $\lvert \mathcal{G}\lvert$ & \bf $r$\\ \midrule
        \polblogs{} \cite{polblogs}& Political Party & $1,222$ & $19,024$ & $2$ & $0.81$ \\
        \emaileu{} \cite{email} & Department & $339$ & $7,066$ & $8$ & $0.72$ \\
        \lastfm{} \cite{feather} & Country & $2,785$ & $17,017$ & $9$ & $0.90$\\
        \deezer{} \cite{feather} & Gender & $1,090$ & $3,623$ & $2$ & $0.02$\\
        \bottomrule
    \end{tabular}
\end{table}

\subsection{Node Classification}
Let $f$ be a binary classifier that takes as input node features $x_i\in\Re^d$ and outputs a binary label $\mathbf{\hat{Y}}_i \in \{0, 1\}$ for node $i \in G$. Suppose that a positive label is a more desired classification, such as access to a resource of interest. We apply our group-free group fairness measure to ensure that the positive classification labels are distributed throughout $G$. Following prior work in group fairness such as \citep{equality-of-opportunity}, we achieve this via post-processing of the prediction labels. The objective is to produce a re-labeled vector $\mathbf{\Tilde{Y}} \in \{0, 1\}^n$ that modifies as few labels as possible while reducing inequality in the allocation of positive labels. We formulate this as a mixed-integer program over the variable $\mathbf{\Tilde{Y}}$:
\begin{align*}
    &\argmin_{\Tilde{Y} \in \mathbb{R}^n} \sum_{i=1}^n \left[ \left(1 - 2\mathbf{\hat{Y}}_i\right)\mathbf{\Tilde{Y}}_i + \mathbf{\hat{Y}}_i\right]\\
    &\text{s.t.}\quad \mathbf{\Tilde{Y}} \in \{0, 1\} \\ 
    &A(K, \mathbf{\Tilde{Y}}) \geq \theta_{min} \tag{\hbox{soft group minimum exposure}}\\
    &\sum\nolimits_{i=1}^n\mathbf{\Tilde{Y}}(i) \leq \sum\nolimits_{i=1}^n\mathbf{\hat{Y}}(i) \tag{\hbox{preserve num. of positive labels}}
\end{align*}
The objective function penalizes perturbations to the prediction labels, as $ \left(1 - 2\mathbf{\hat{Y}}_i\right)\mathbf{\Tilde{Y}}_i + \mathbf{\hat{Y}}_i = 1$ if $\mathbf{\hat{Y}}_i \ne \mathbf{\Tilde{Y}}_i$. Next, the minimum exposure constraint requires that every weighted average, the fraction of nodes labeled positively, be at least $\theta_{min}$. We do not directly constrain $\Delta_b$ as it is not linear but note that lower bounding the weighted averages indirectly constrains $\Delta_b$. Finally, we preserve the total number of positive labels in the entire network. 
\subsubsection{Results}
We solved the mixed-integer program on the \polblogs{} and \emaileu{} networks. Figure \ref{fig:classification} visualizes the post-processed node labels as the minimum threshold is gradually increased until a feasible solution does not exist. For both networks, we initialize the node labels such that the positive labels are allocated exclusively to select ground-truth groups. As the minimum threshold increases, the labels are progressively distributed throughout the network. For \polblogs{}, the ground-truth inequality ($\Delta_b^0$) decreases from $0.85$ to $0.04$; at the highest threshold, $32\%$ of node labels are modified by post-processing. Similarly, for \emaileu{}, $\Delta_b^0$ decreases from $2.45$ to $0.09$, where at the maximum threshold, $29\%$ of labels are modified.
In Appendix \ref{sec:node-classification-node-features}, we analyze correlations between node features, such as degree and betweenness centrality, and prevalence of positive labels following post-processing.
\begin{figure}
     \centering
     \begin{subfigure}[b]{0.49\textwidth}
         \centering
         \includegraphics[width=\textwidth]{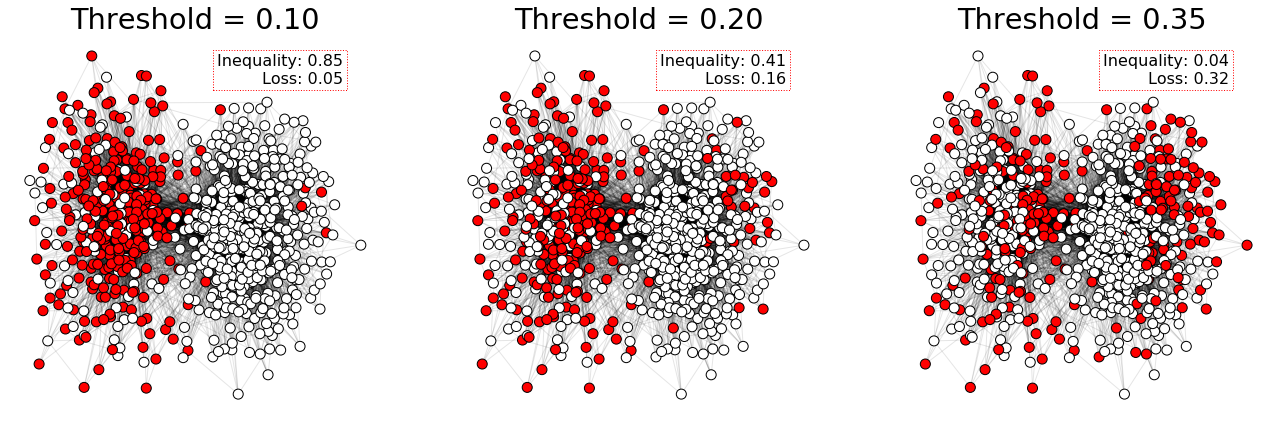}
         \caption[]{\polblogs{}}
         \label{fig:classification-polblogs}
     \end{subfigure}
     \begin{subfigure}[b]{0.49\textwidth}
         \centering
         \includegraphics[width=\textwidth]{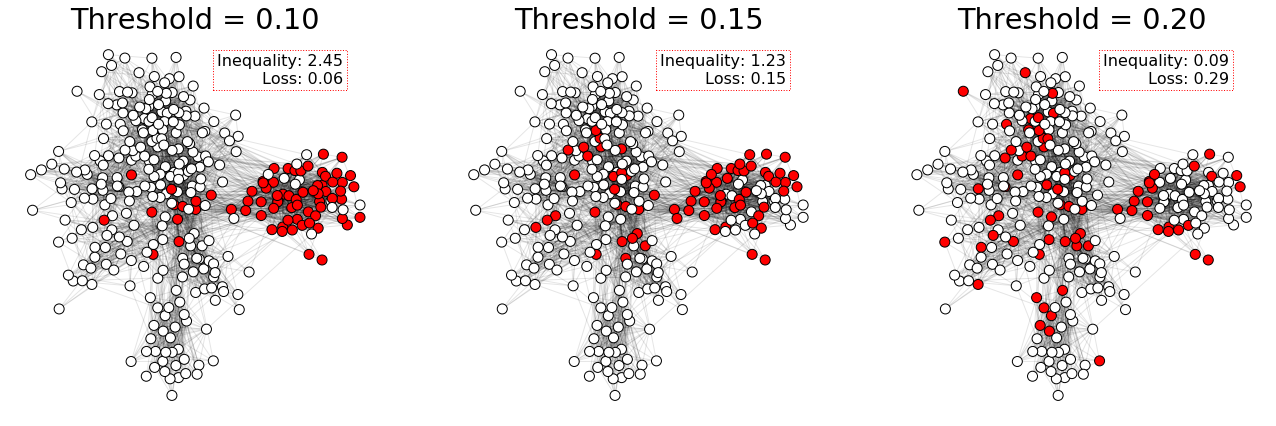}
         \caption[]{\emaileu{}}
         \label{fig:classification-email}
     \end{subfigure}
    \caption[]{
    \textbf{Classification} Post-processing of classification results with our approach on two real-world networks and varying minimum exposure thresholds. Red nodes indicate positive outcomes, becoming increasingly dispersed across groups with higher thresholds. The loss values are the fraction of original labels perturbed in post-processing.
    }
    \label{fig:classification}
\end{figure}

\subsection{Maximizing information access}\label{sec:influence-max}
\subsubsection{Setting}
This task is concerned with selecting "seed" nodes in graph $G$ such that access to the seeded information is maximized across groups after the information diffuses in the network. This objective is especially relevant when important information related to, for instance, health and employment should be distributed fairly in a network.
The information diffusion proceeds is often assumed as follows: at time step $t = 0$ the $k$ seed nodes receive the information and become informed; at future time steps $t$, each node that received the information in step $t-1$ shares the information with each uninformed neighbor with probability $p$; if no nodes receive the information at a time step, the cascade terminates. It is known that selecting the optimal $k$ seed nodes is NP-hard, but, as the objective is submodular, an approximation algorithm that iteratively chooses seeds by greedily maximizing reach -- the expected number of individuals who receive the information -- yields a reach within $1 - 1/e$ of the optimal reach \cite{kleinberg2003influence}.

Several past works have developed algorithms for fair information access in this setting. Many of these algorithms assume access to group labels and aim to balance the information reach among groups so that no group disproportionately receives the information \cite{stoica2020seeding, tsang2019group, rahmattalabi2021welfare}. In contrast \citet{fish} does not assume access to community labels and takes an individual fairness approach. In \citet{fish}, for a given seed set, each node $v$ has a probability $p_v$ of receiving the information; the algorithm iteratively selects the seeds that maximize the minimum value of $p_v$ across all nodes.

\subsubsection{Method}
We present a greedy algorithm for fair information access maximization using our group-free fairness notion. Let $\mathcal{S}$ be the current seed set and let $i$ be a candidate new seed not in $\mathcal{S}$. Further, let $p_v\left(\mathcal{S} \cup s'\right)$ be the probability that node $v$ is activated when an independent cascade starts at the seed set $\mathcal{S} \cup s'$. The smoothed activation probabilities are then $A\left(K,  p\left(\mathcal{S} \cup s' \right)\right)$. Building on approaches that maximize the activation of the worst-off group \cite{tsang2019group}, we iteratively select seeds by maximizing: 
\begin{equation}\label{eqn:greedy-ic}
    \argmax_{s'\in V\setminus \mathcal{S}} \left[ \min A\left(K,  p\left(\mathcal{S} \cup s'\right)\right) \right]
\end{equation}
We repeated the experiment with multiple values of $d$, the number of embedding dimensions, and report results for the setting that most reduces ground-truth inequality; we show in Appendix \ref{sec:influence-max-additional} that alternative values of $d$ yield comparable results. Further, when ground-truth labels are not available to tune $d$, general methods to tune the number of dimensions for the homophily model of choice can be used. In the case of Laplacian Eigenmaps, the number of dimensions can be tuned based on the graph spectra \cite{lee2014multiway, ahmed2013distributed}.

We compare our algorithm against three baselines: vanilla information access maximization (``Greedy Reach"), the individual fairness algorithm from \citet{fish} (``Fish et al."), and community detection (``Community"). The community detection baseline also uses the network but defines groups based on detected communities instead based on node similarities. Specifically, it optimizes Eq. \ref{eqn:greedy-ic} with the community-detection kernel, which is analogous to the ground-truth kernel defined in subsection \ref{sec:kernel} but with groups recovered by the Louvain \cite{blondel2008fast} community detection algorithm.

\subsubsection{Results}
Figure \ref{fig:ic} shows the results of our evaluation.
Across all algorithms and graphs, the ground-truth inequality generally decreases as the seed set increases, as the cascade approaches maximal coverage. However, for a given seed budget, our group-free algorithm is generally able to achieve the lowest ground-truth inequality (also note that plots are on log-log scale). In the case of \polblogs{}, our algorithm achieves a substantial improvement in lowering inequality for all seed budgets and for \lastfm, our algorithm achieves the lowest inequality for seed budgets larger than $10$ nodes. In the case of \emaileu, our algorithm yields lower inequality for most seed budgets larger than $20$ nodes. 

\begin{figure}
    \centering
    \includegraphics[width=\linewidth]{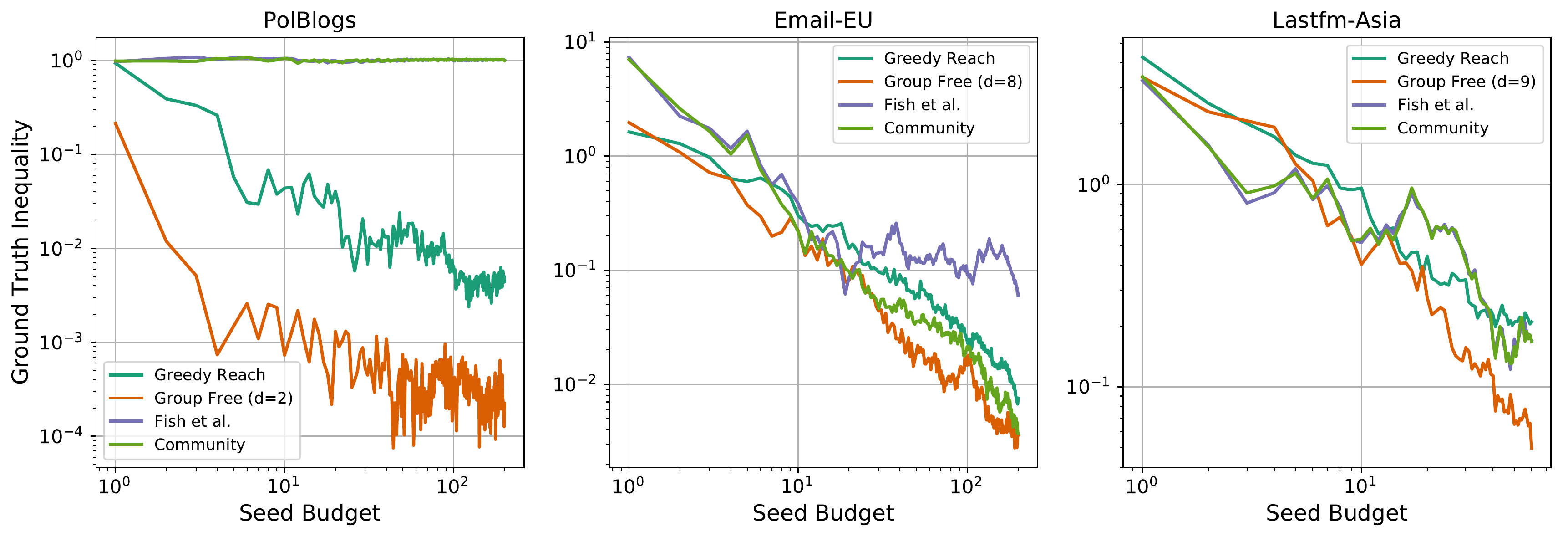}
        \caption[]{\textbf{Maximizing Information Access} Comparison of our approach (``Group-Free") to 
        vanilla information access maximization (``Greedy Reach"), individually fair information access maximization (``Fish et al."), and community detection (``Community") on real-world networks. Plots are log-log scale, showing the seed budget (x-axis) vs inequality among predefined groups (y-axis). Lower values are better.
    }
    \label{fig:ic}
\end{figure}

Not only do the maximizing information access experiments attest to our measure's ability to lower ground-truth inequality, they also surface potential pitfalls of the community detection approach. This is most clearly seen in the case of \polblogs{} for which the community detection baseline produces very high inequality. This behavior arises from small and disconnected communities being erroneously inferred by the detection algorithm.
When using such discrete partitions of the graph as groups, objectives that measure outcomes for the worst-off group become all-or-nothing objectives and may fail since the cascade is unlikely to reach all detected groups (which can be arbitrarily bad). 
In contrast, our group-free measure does not incur such all-or-nothing costs as it is based on continuous similarity values.\footnote{Although not considered in prior work, we extend in Appendix \ref{sec:influence-max-additional} community detection baselines to continuous settings and compare it to our proposed approach. Results are largely in line with the results reported in this section and support the benefits of our approach beyond such discretization issues.}
\subsection{Fairness for users in recommender systems} \label{sec:ranking}

\subsubsection{Setting}

We consider a recommendation task with $m$ items to rank for each of the $n$ users of the social network $G$. The preference score of user $i$ for item $j$, which is typically obtained from a learned value model, is denoted by $\rho_{ij} \in [0,1].$ Given the scores $(\rho_{ij})_{\substack{1\leq i \leq n \\ 1 \leq j \leq m}}$ as input, the recommender systems produce a stochastic ranking policy $P \in [0,1]^{n \times m \times m}$ such that $P_i$ is a bistochastic matrix, and $P_{ijk}$ is the probability to show item $j$ to user $i$ at rank $k$ \citep{singh2018fairness,do2021two}. Formally, the set of stochastic ranking policies is defined as:
$\mathcal{P} = \{P  \in [0,1]^{n \times m \times m}: \, \forall i \in \intint{n}, \,\forall k \in \intint{m},\sum_{j=1}^m P_{ijk} = 1\, \text{ and } \forall j \in \intint{m}, \sum_{k=1}^m {P_{ijk} = 1} \}.$

Following \citep{singh2018fairness,morik2020controlling,do2021two,biega2018equity}, we consider a position-based user model where the probability that a user observes an item only depends on its rank. It is parameterized by decreasing position weights: $b_1 \geq ... \geq b_m \geq 0.$ In this model, the utility of user $i$ for a ranking policy $P$ is defined as: $u_i(P) = \sum_{1\leq j,k \leq m} b_k \rho_{ij} P_{ijk}$.
The goal of the recommender system is to find a ranking policy $P$ that maximizes a recommendation objective $f(P):$ $\max_{P \in \mathcal{P}} f(P).$ The traditional objective is the average user utility $f(P) = \frac{1}{n}\sum_{i=1}^n u_i(P),$ which is optimized by sorting $\rho_{ij}$ descendingly for each user $i$ \citep{robertson1977probability}.

In addition to utility, we consider the fairness of rankings for users, defined as balancing the exposure of items across user groups (e.g., ``a job ad should be equally exposed to men and women'') \citep{imana2021auditing,asplund2020auditing,bogen2023towards,usunier2022fast}. Following \citep{usunier2022fast}, we define the exposure of an item $j$ to a user $i$ as $e_{j|i}(P) = \sum_{k=1}^m b_k P_{ijk},\,$ and the exposure of an item $j$ to a user group $g \in \mathcal{G}$ as the average $\mu_{j|g}(P) = \frac{1}{|g|} \sum_{i' \in g} e_{j|i'}(P).$
The user fairness criterion of \emph{balanced exposure} requires that for all items $j,$ the exposures of $j$ to all user groups $(\mu_{j|g}(P))_{g \in \mathcal{G}}$ should be as equal as possible. As in \citep{usunier2022fast}, we formalize it by measuring the ground-truth group unfairness with the standard deviation, averaged over items $j \in \{1,\ldots,m\}$:
\begin{align}\label{eq:ineq-ranking}
\begin{split}
    \frac{1}{m}&\sum_{j=1}^m \tilde{F}\left(\left(\mu_{j|g_i}(P)\right)_{i=1}^n\right)\\ 
    \text{where }\tilde{F}(\Vec{y}) &:= \sqrt{\eta + \sum_{i=1}^n \left(y_i - \frac{1}{n}\sum_{i'=1}^n y_{i'}\right)^2}
\end{split}
\end{align}
$\eta > 0$ is a small smoothing parameter to avoid infinite derivatives at $0$. $\tilde{F}$ is not strictly an inequality measure because it is not normalized by the average outcome. We remove the normalization to make it convex, 
which does not impact our analysis because the sum of exposures is constant (i.e., $\forall j, \, \forall P \in \mathcal{P}, \, \sum_{i=1}^n e_{j|i}(P) = n \| b\|_1).$ Its weighted version $\tilde{F}(\Vec{y},\Vec{w})$ is defined as in Eq. \eqref{eq:weighted_entropy}.

Given a similarity kernel $K,$ we now define the fairness-aware recommendation objective, which is a trade-off between average user utility and unfairness, controlled by a parameter $\beta >0$:
\begin{align}\label{eq:rankingobj-kernel}
\begin{split}
    f_K(P) &= \frac{1}{n} \sum_{i=1}^n u_i(P) - \beta \frac{1}{m}\sum_{j=1}^m \tilde{F}(A(K,e_{j}(P)), K \Vec{1})\\
    &\text{where } e_j(P) = (e_{j|i})_{i=1}^n.
\end{split}
\end{align}

If sensitive group labels $\mathcal{G}$ are available, the recommender system maximizes $\max_{P \in \mathcal{P}} f_{K^*}(P)$, where $K^*$ is the ground-truth kernel, which is equivalent to using the ground-truth group measure \eqref{eq:ineq-ranking} as penalty.
When group labels are not available, we construct a similarity kernel $K$ from the sole network $G$ and solve for $\max_{P \in \mathcal{P}} f_{K}(P).$

We compare the $\max_{P \in \mathcal{P}} f_{K}(P)$ objective with another group-free objective, which simply consists in measuring unfairness at the level of individual users -- this is also equivalent to using $f_K(P)$ when $K$ is the identity matrix $I_n$:
\begin{align}\label{eq:ranking-individual}
    \quad f_{I_n}(P) = \frac{1}{n}\sum_{i=1}^n u_i(P) - \beta \frac{1}{m} \sum_{j=1}^m \tilde{F}(e_{j}(P))
\end{align} 

\subsubsection{Experiments}
We use the Lastfm-Asia and Deezer-Europe datasets \citep{feather}, which both provide user preference data. Following the protocol of \citep{patro2020fairrec,do2021two,wang2021user}, we generate a full user-item preference matrix $(\rho_{ij})_{i,j}$ using a standard matrix completion algorithm (more details are given in App. \ref{app:ranking}). Note that the Deezer-Europe network is much sparser than the Lastfm-Asia network and has much lower assortativity, i.e., lower homophily with respect to the sensitive attribute.

\begin{figure}[t]
    \centering
\includegraphics[width=0.45\linewidth]{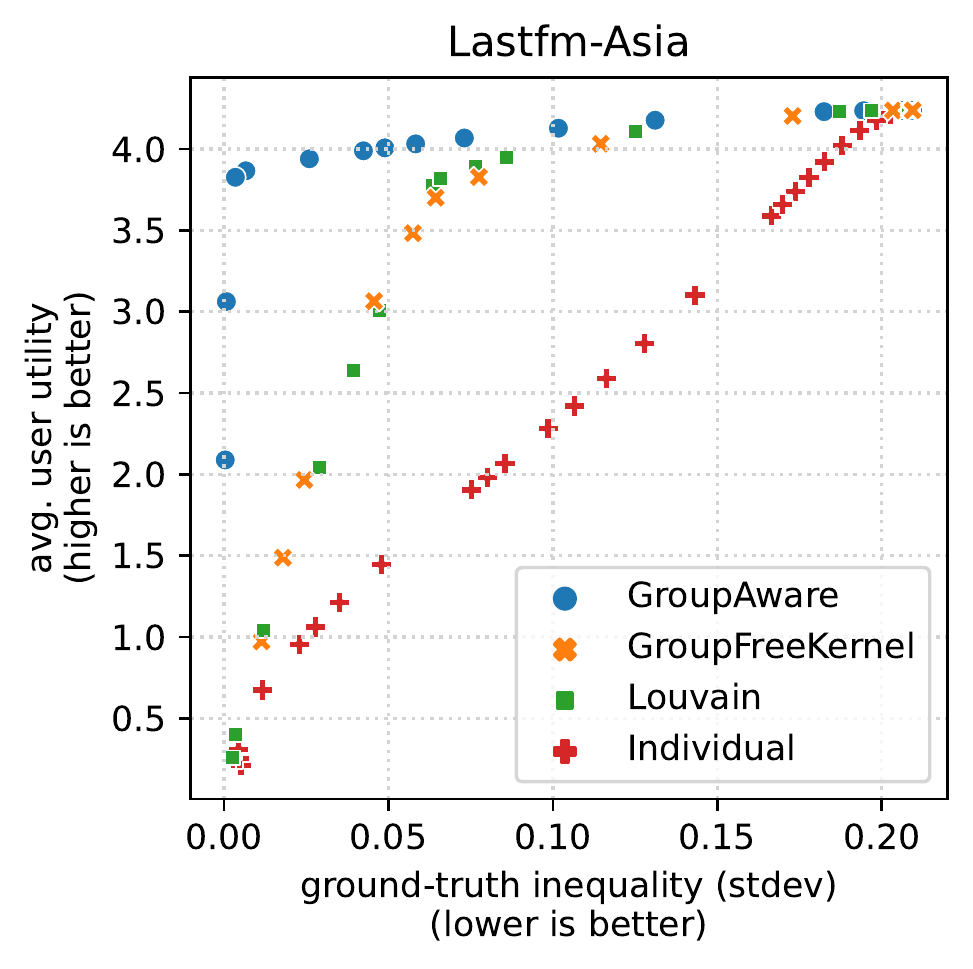}
\includegraphics[width=0.45\linewidth]{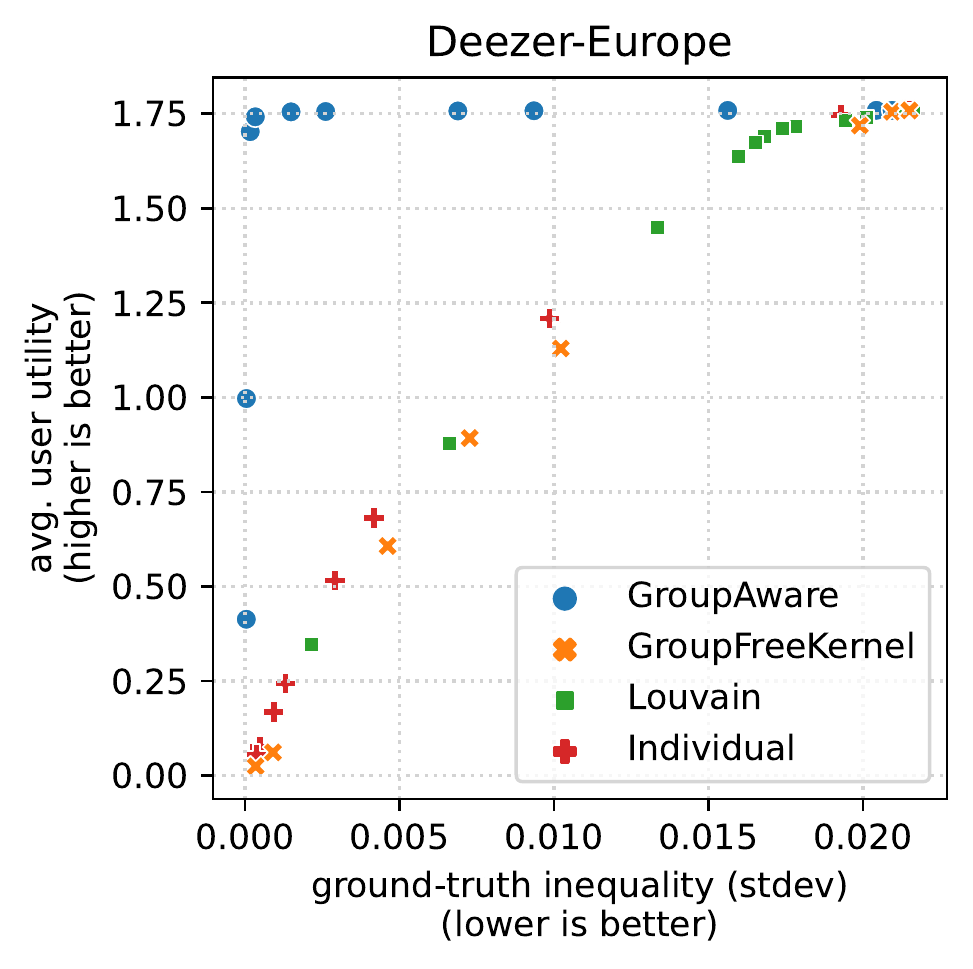}
    \caption[]{\textbf{Recommendation} Utility-fairness trade-offs on (Left) Lastfm-Asia and (Right) Deezer-Europe with our method (\lapker), individual fairness (\indiv), standard group fairness (\groupaware), and community detection (\louvain).
    }\label{fig:main-rk}
\end{figure}

\paragraph{Methods} Our method uses a kernel $K$ (\lapker) constructed from the social network following subsection \ref{sec:kernel-construction}, for various values of embedding dimension $d.$ We generate rankings by solving $\max_{P \in \mathcal{P}} f_K(P)$ using the Frank-Wolfe method for ranking of \citep{do2021two}. We compare them to the rankings obtained from the ideal objective $f_{K^*}$ (\groupaware) (similar to \citep{usunier2022fast}), and the individual fairness baseline $f_{I_n}$ (\indiv). We also consider a community detection baseline (\louvain) which uses the same Louvain community detection kernel described in subsection \ref{sec:influence-max}.
We chose $\eta = 0.1$. The resulting rankings obtained are evaluated by measuring the trade-offs between average user utility and fairness w.r.t. the ground-truth user groups when varying $\beta$ in all objectives.

\paragraph{Results} 
Figure \ref{fig:main-rk} shows the trade-offs obtained by the different methods on Lastfm-Asia (left) and Deezer-Europe (right). Here we show the results obtained with the best parameters for \lapker and \louvain. Experiments with other parameters and community detection baselines are presented in Appendix \ref{app:ranking}. The $y$-axis represents group unfairness, measured by the ground-truth measure
as specified in Equation~\autoref{eq:ineq-ranking}.

On the Lastfm-Asia dataset, our \lapker method achieves trade-offs in-between the \groupaware approach, which requires group labels, and the \indiv fairness approach. As expected, enforcing equality across all individual users instead of groups of users incurs a large cost for total user utility. Our kernel approach is able to mitigate this cost while reducing between-group inequality, without the need for group labels. Our \lapker obtains similar results to the \louvain community detection baseline. On this particular social network, the Louvain algorithm is very good at identifying the ground-truth groups because the network exhibits strong homophily on country. Although our continuous kernel does not perform a hard assignment from users to communities, and hence remains group-free, it still obtains results that are on par with \louvain.

On the Deezer-Europe dataset, both methods based on the social network, i.e. the \louvain community detection baseline and our \lapker approach, obtain poor trade-offs between user utility and inequality. They are no better than simply minimizing individual-level inequalities (\indiv) and very far below the results obtained when knowing the gender group labels (\groupaware). This is because the underlying network is too sparse to extract any information from it: users make very few connections on Deezer and mostly use it to listen to music. This stresses the importance of network structure for our method (and any network-based method). Nonetheless, given this disassortative network, our method does also not degrade performance compared to the individual-level baseline.
\section{Related work and Discussion}\label{sec:discussion}
\paragraph{Fairness and networks}

Several past works have utilized networks to balance outcomes among groups; in contrast to our work, these works generally assume access to sensitive attributes and additionally consider inequality relative to network position. For instance, \citet{boyd2014networked} emphasizes the need to \emph{``also develop legal, social, and ethical models that intentionally account for networks, not just groups and individuals.''}.
Similarly, \citet{mehrotra2022revisiting} introduces the notions of intra-network and inter-group fairness. Intra-network fairness measures inequality in outcomes among groups, where outcomes are weighted by network position. 
On the other hand, inter-group fairness evaluates diversity in \emph{interactions} among groups, e.g., the diversity of connections in a hiring network relative to protected groups. We discuss fairness and network measures in more detail in Appendix \ref{sec:additional-related-work}. 

\paragraph{Group fairness and proxies} When group labels are unavailable, observed proxy labels are often used to approximate group fairness measures. Proxy attributes have been used to measure racial disparities in various domains such as hiring \citep{ahmed2013gay,bertrand2004emily}, ad delivery \citep{sweeney2013discrimination}, finance \citep{bureau2014using,awasthi2021evaluating} and healthcare \citep{fremont2005use,brown2016using}. 
The use of proxy attributes is also common in the economic literature on field experiments for estimating discrimination (see \citet{bertrand2017field} for an overview).
Common proxy attributes include name and address as proxies for gender and race \citep{ahmed2013gay,bertrand2004emily,milkman2012temporal}, as well as past activity in affinity groups 
as a proxy for perceived sexual orientation \citep{ahmed2013gay} and religion \citep{wright2013religious}. 
In the algorithmic fairness literature, several methods aim to incorporate such proxy labels \citep{gupta2018proxy,diana2022multiaccurate} or show under which theoretical assumptions it is possible to recover the ground truth fairness metric values \citep{chen2019fairness,awasthi2020equalized,ghazimatin2022measuring}.
While previously listed proxy measurements assign individuals to binary or categorical groups, social networks allow for a more nuanced view of group membership \citep{atwood2019fair}, since individuals may belong to multiple, overlapping communities \citep{palla2005uncovering}, with complex hierarchies \citep{ronhovde2009multiresolution}. 

An important concern with proxy labels is that inferring unobserved characteristics from them poses privacy risks, as well as dignity concerns \citep{andrus2021we,chun2018queerying}, especially for vulnerable groups, such as queer or religious communities \citep{tomasev2021fairness}. Moreover, many prominent demographic classes, such as race and gender, are cultural constructs that vary over time \citep{jacobs2021measurement}. As emphasized by \citet{jacobs2021measurement}, operationalizing those constructs by measuring proxies may harm marginalized individuals and perpetuate stereotypes \citep{tomasev2021fairness,abbasi2019fairness,chun2018queerying}.
Similarly for social networks, inferring discrete groups by community detection can cause privacy breaches such as
attribute disclosure \citep{zheleva2012privacy} and de-anonymization \citep{nilizadeh2014community}. Since our approach does not assign individuals to discrete immutable categories, it avoids the privacy and stereotyping issues posed by discrete proxies or by partitioning the network. However, like proxy measures, it may lead to poor estimation of the true fairness metric value
\citep{zhang2018assessing,chen2019fairness,baines201411,ghazimatin2022measuring}. 

\paragraph{Limitations}
Our approach is limited to settings where a social network is observable and exhibits homophily for an unobservable attribute of interest.  Moreover, even when such a network is available, it is important to keep in mind that graph datasets are biased by how they are collected and sampled and how links are defined \citep{olteanu2019social,gonzalez2014assessing,de2010does}. These linking biases \citep{olteanu2019social} affect the resulting network-based fairness measures. 

In addition to the limitations and biases of graph data, our method has various practical limitations in real-world use cases. Here we detail three such limitations: First, our approach requires discernable community structure in the network which might not be present in a given network or it might be difficult to infer the correct structure in case of noisy observations (e.g., missing or wrong links). 
Second, our approach relies strongly on the assumption that the relevant sensitive attributes partake in the homophilous network formation process that creates the observed network. While our approach can tolerate some confounding non-sensitive attributes, it relies on the relevant attributes being sufficiently involved in the link-formation process. Results from social science indicate that race and ethnicity, age, religion, education, occupation, and gender are among the strongest sources of homophily in personal environments and roughly in that order~\citep{mcpherson2001birds}. However, these properties vary across networks and environments --- for instance, in heterophilous networks --- and can not be blindly assumed to hold.
Finally, a third limitation is that our measure may require multiple rounds of tuning the embedding algorithm, the number of embedding dimensions, as well as the kernel similarity function, given that the optimal configuration is graph specific.
\section{Conclusion}
In this work, we proposed a novel approach to algorithmic fairness which is based on homophily in social networks to define group fairness without demographics. We ground our approach in well-known results from social science and, furthermore, propose a group-free measure of group fairness based on decomposable inequality metrics. In comparison to naive approaches based on community detection as well as other approaches to group fairness without demographics, our approach does not infer group memberships of individuals at any step of algorithm. Experimentally, we demonstrate the effectiveness and versatility of our approach across three diverse tasks, i.e., classification, maximizing information access, and recommendation. Moreover, even as we remain group-free, our approach is able to yield comparable or better results compared to community detection as our similarity-based approach is able to avoid pitfalls of discrete groups. While social network data might not be available in many scenarios, we believe that our approach is a promising and privacy-respecting approach to group fairness for internet platforms where such data is present. 

\bibliographystyle{ACM-Reference-Format}
\bibliography{references}

\appendix
\section{Proofs}
\subsection{Proposition \ref{prop:additive-decomposability}} \label{sec:decomposability}
\begin{proof}
First, we show that Equation \autoref{eqn:prop-decomposability} implies additive decomposability. This can be shown by defining the kernel given a partition $\mathcal{G}$ of $\{1,\ldots,n\}$; namely define the kernel to be the ground-truth kernel where $K^*_{ij} = 1 / \lvert g \lvert$ if $i$ and $j$ are both in group $g \in \mathcal{G}$, otherwise $K^*_{ij} = 0$. Plugging this kernel into Equation \autoref{eqn:prop-decomposability}, we have exactly the condition for additive decomposability in Equation \autoref{ref:decomposability}.

Second, first notice that if columns of $K$ do not sum to 1 but still sum to the same value $c>0$, then invariances to the scale of $A$ and $F$ mean that the same decomposition holds by replacing $n$ with $cn$ in the denominator. We prove this more general equality.

We show that additive decomposability implies Equation \autoref{eqn:prop-decomposability}. Let us begin by considering $K$ such that all entries are integer. In this case, we can define partitions $\mathcal{G}$ given the kernel. Think of the constant $C$, the sum of each column, as the number of times each element in $x$, as defined in Equation \autoref{eq:weighted_entropy}, should be repeated. Repeat $x$ and $w$ $C$ times, which does not affect the inequality thanks to the property of replication invariance $F\left(x^{\otimes C}, w^{\otimes C}\right) = F(x, w)$. Then construct $n$ groups, one for each $i$, with $K_{ij}$ repetitions of $x_j$. If we apply the condition of additive decomposability in Equation~\autoref{ref:decomposability} to these partitions, we recover Equation~\autoref{eqn:prop-decomposability}. Now, we extend to kernels with rational values, let $c$ be the smallest common multiple of the denominator of values in $K_{ij}$ and denote by $\tilde{K} = c K$. The invariance to input scale $K$ by applying the properties that $F$ is continuous and invariant to input scaling, which allows for extending from integers to rationals. We can then extend to the whole set of real values by continuity.
\footnote{The extension from rationals to real values requires uniform continuity. (Weighted) generalized entropies have singularities at $0$, when the sum of the weights $K_i^\top\Vec{1}=0$ or the mean $A(K, \datavec)_i=0$. However, we only need to extend by continuity to non-zero entries of $K$ since $0$ is rational, so let us define $k_0 = 0.5\min_{i,j:K_{ij}>0} K_{ij}$ and let $\tilde{\mathcal{K}} = \Big\{\tilde{K}\in\Re^{n \times n}: \forall i,j, \text{either }K_{ij}=0 \text{ and } \tilde{K}_{ij}=0 \text{ or } \tilde{K}_{ij}>k_0 \Big\}$. Then it is easy to check that $\tilde{K}\mapsto q(A(\tilde{K}, \datavec))$ and $\tilde{K}\mapsto F(A(\tilde{K}, \datavec), \tilde{K}\Vec{1})$ are uniformly continuous on $\tilde{\mathcal{K}}$.}
\end{proof}

\subsection{Lemmas for Proposition \ref{prop:inequality-bounds}}\label{sec:lemmas-bounds}
We begin by considering a setting where nodes in separate sensitive groups share some similarity. Figure \ref{fig:confounder-blend} illustrates such a setting. The left kernel reflects the sensitive groups whereas the right kernel introduces between-group similarities between sensitive groups. The parameters $p, q$ are the values of the kernel matrix where $p > q$.
\begin{figure}
    \centering
    \includegraphics[width=0.5\linewidth]{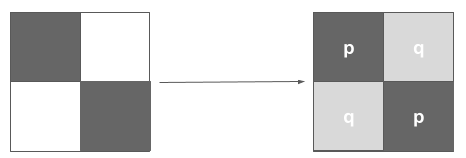}
    \caption{The kernel on the left reflects the groups separated by a sensitive attribute, where for the purpose of example the two classes are of equal size. Next, we introduce a level of similarity between nodes in differing sensitive groups yielding the kernel on the right, where values of the kernel matrix are shown in white and $q < p$. The addition of $q$ dampens the between-group inequality.}
    \label{fig:confounder-blend}
\end{figure}
\begin{lemma}\label{lemma:blend}
    Let a protected attribute partition the population into sensitive groups of equal size. Further, suppose two nodes in the same sensitive class have similarity $p$ whereas two nodes in differing classes have similarity $q$ where $q < p$. We call the similarity between nodes in separate groups the ``blending" similarity. Then, with the blending similarity, between-group inequality decreases according to:
    \begin{equation*}
        \Delta'_b = \left(\frac{p - q}{p + q}\right)^2 \Delta^0_b 
    \end{equation*}
    Where $\Delta^0_b$ is the between-group inequality without the blending similarity (left kernel in Figure \ref{fig:confounder-blend}) and $\Delta'$ is the between-group inequality with the blending similarity.
\end{lemma}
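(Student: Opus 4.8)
The plan is to compute $\Delta'_b = \tilde{F}(A(K, \datavec), K\Vec{1})$ directly by exploiting the symmetry of the two-group setup. First I would fix notation: let the two sensitive groups each have size $n/2$, with $\mu_1 = \frac{n}{2}(1/2+\epsilon)$ positive labels in group 1 and $\mu_2 = \frac{n}{2}(1/2-\epsilon)$ in group 2 (using the global setup from Equation \autoref{eqn:ground-truth-inequality}). The blended kernel $K$ has $K_{ij} = p$ if $i,j$ are in the same group and $K_{ij}=q$ otherwise. Since the columns of $K$ all sum to the same value $c = \frac{n}{2}p + \frac{n}{2}q = \frac{n}{2}(p+q)$, the scale invariance of $A$ and $F$ (and the remark immediately preceding Proposition \ref{prop:additive-decomposability}) means we may work with this un-normalized $K$ and simply track $c$.

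The key computational step is evaluating $A(K, \datavec)_i = \frac{(K\datavec)_i}{(K\Vec{1})_i}$. For a node $i$ in group 1, $(K\datavec)_i = p\cdot(\text{# positives in group 1}) + q\cdot(\text{# positives in group 2}) = p\mu_1 + q\mu_2$, and $(K\Vec{1})_i = c$. So $A(K,\datavec)_i = \frac{p\mu_1 + q\mu_2}{c}$ for all $i$ in group 1, and symmetrically $\frac{p\mu_2 + q\mu_1}{c}$ for all $i$ in group 2. Thus $A(K,\datavec)$ takes only two values, each on $n/2$ coordinates, and the weight vector $K\Vec{1} = c\Vec{1}$ is uniform. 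Plugging into the weighted entropy/standard-deviation form used for $\Delta_b$ with $\alpha = 2$ (half squared coefficient of variation, as noted after Equation \eqref{ref:decomposability}), a two-point distribution with equal weights and values $a_1, a_2$ gives inequality $\left(\frac{a_1 - a_2}{a_1 + a_2}\right)^2$. Here $a_1 - a_2 = \frac{(p-q)(\mu_1-\mu_2)}{c}$ and $a_1 + a_2 = \frac{(p+q)(\mu_1+\mu_2)}{c}$, so $\Delta'_b = \left(\frac{p-q}{p+q}\right)^2\left(\frac{\mu_1-\mu_2}{\mu_1+\mu_2}\right)^2 = \left(\frac{p-q}{p+q}\right)^2 \Delta^0_b$, using Equation \autoref{eqn:ground-truth-inequality} for the last factor.

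I expect the only real subtlety — not so much an obstacle as a point to state carefully — is justifying that the formula for $\Delta_b$ on a two-point equal-weight distribution is exactly $\left(\frac{a_1-a_2}{a_1+a_2}\right)^2$; this is the same elementary computation already used (implicitly) in Equation \autoref{eqn:ground-truth-inequality} for $\Delta_b^0$, so I would either cite that line or note it follows from $F(x) = \mathrm{var}(x)/(2\mu^2)$ with $\mathrm{var} = \frac{1}{4}(a_1-a_2)^2$ and $\mu = \frac{1}{2}(a_1+a_2)$. Everything else is the symmetry observation that $A(K,\datavec)$ is constant within each sensitive group and that $K\Vec{1}$ is constant, which reduces the general weighted entropy to this closed form. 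No new machinery beyond Proposition \ref{prop:additive-decomposability}'s scale-invariance remark is needed.
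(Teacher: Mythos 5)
Your proposal is correct and follows essentially the same route as the paper's proof: compute $A(K,\datavec)$ explicitly, observe that it is two-valued with uniform weights, invoke scale invariance of $F$, and factor the result as $\left(\frac{p-q}{p+q}\right)^2$ times the ground-truth inequality. The only cosmetic difference is that you make the two-point closed form $\left(\frac{a_1-a_2}{a_1+a_2}\right)^2$ explicit, whereas the paper treats it as a change of variables in Equation~\autoref{eqn:ground-truth-inequality}; this is the same calculation.
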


\begin{proof}
    Let there be a population $V = \{1, 2, \cdots, 2n\}$ and a sensitive attribute that partitions the population into two halves $V_1 = \{1, 2, \cdots, n\}, V_2 = \{n+1, n+2, \cdots, 2n\}$. In the case without the between-group similarity, let us define the ground-truth kernel matrix $K_s\in \mathbb{R}^{2n \times 2n}$, where $K_s(ij)$, the similarity between nodes $i$ and $j$, is:
    \begin{equation*}
    K_s(ij) = \begin{cases} 
          1 & i, j \in V_1 \cup i, j \in V_2\\
          0 & \text{otherwise} \\
       \end{cases}
    \end{equation*}
    Let the  variable $\datavec_i\in [0, 1]$ denote the allocation of a desirable quantity to individual $i$. Further let $\mu_1 = \frac{1
    }{n} \sum_{i \in V_1} \datavec_i$ be the average allocation for $V_1$ ($\mu_2$ is defined analogously). Let the inequality function $F$ be the normalized variance.

    Now let us examine the between-group inequality with the blending similarity. Consistent with Figure \ref{fig:confounder-blend}, let us define the kernel similarity values between two nodes $i, j$ as:
    \begin{equation*}
    K_{ij} = \begin{cases} 
          p & i, j \in V_1 \cup i, j \in V_2\\
          q & \text{otherwise} 
       \end{cases}
    \end{equation*}
    To achieve additive decomposability, the columns of the kernel matrix are normalized. Now, applying Equation \ref{eq:weighted-neigh}, the between-group inequality with the blending similarity is:
    \begin{align}
        \Delta'_b &= F(A(K, \datavec))\\
        &= F\left(\frac{p\mu_1 + q\mu_2}{p + q}, \cdots, \frac{p\mu_1 + q\mu_2}{p + q}, \frac{q\mu_1 + p\mu_2}{p + q}, \cdots, \frac{q\mu_1 + p\mu_2}{p + q}\right)\\
        &= F\left(p\mu_1 + q\mu_2, \cdots, p\mu_1 + q\mu_2, q\mu_1 + p\mu_2, \cdots, q\mu_1 + p\mu_2\right) \label{eqn:blend_normvar}
    \end{align}
    In the final line, we use the fact that $F(cX) = F(X)$. Now, we re-write Equation \ref{eqn:blend_normvar} as a change of variables for Equation \ref{eqn:ground-truth-inequality} to get:
    \begin{align}
        \Delta'_b &= \left[\left(\frac{p-q}{p+q}\right)\left(\frac{\mu_1 - \mu_2}{\mu_1 + \mu_2}\right)\right]^2\\
        &= \left(\frac{p-q}{p+q}\right)^2 \Delta^0_b
    \end{align}
\end{proof}

Both lemmas below utilize the Transfer Principle of inequality functions, which follows from Schur convexity. Let $\datavec$ be a vector of node outcomes, and let $\datavec_i$ be the $i$th largest value in $\datavec$. Then, the Transfer Principle states that for all $1 \leq i < j \leq n$ and $\delta < \left(\datavec_k - \datavec_{k+1}\right)/2 \forall k \in \{1, \cdots, n-1\}$:
\begin{equation*}
    F\left(\cdots, \datavec_i - \delta, \cdots , \datavec_j + \delta, \cdots\right) < F(\datavec)
\end{equation*}
Intuitively, the Transfer Principle states that if the same amount is taken from a better-off individual and given to a worse-off individual, inequality must decrease. 

\begin{lemma}\label{lemma:maxmin}
    Let $\datavec \in \{0, 1\}^n$ be a vector of outcomes such that $\sum_{i=1}^n \datavec_i = n/2$ and $K_s$ be the ground-truth kernel corresponding to a sensitive attribute that partitions the population in half. Then, let $\datavec_s$ be the ground-truth group averages: $\datavec_s = A\left(K_s, \datavec\right)$. 

    We now consider the set of all confounding kernels $K_C = \sum K_c$ such that each $K_c$ is row and column normalized and the node permutation of a block-diagonal matrix, with blocks of equal size. Each confounding kernel yields a smoothed vector $\datavec_C = A\left(K_C, \datavec\right)$. The $\datavec_C$ that maximizes group-free inequality, where $F$ is the normalized variance and $\alpha \in [0, 1]$, is:
    \begin{equation*}
        \datavec = \argmax_{\datavec_C} F\left(\alpha \datavec_s + (1 - \alpha) \datavec_C\right)
    \end{equation*}
    The smoothed vector $\datavec_C = A\left(K_C, \datavec\right)$ that minimizes group-free inequality is:
    \begin{equation*}
        \overline{\mathbf{y}} = \frac{\mathbf{y}^T\Vec{1}}{n}\Vec{1} = \argmin_{\datavec_C} F\left(\alpha \datavec_s + (1 - \alpha) \datavec_C\right)
    \end{equation*}
\end{lemma}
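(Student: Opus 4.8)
The first move is to rewrite everything so the Transfer Principle applies directly. Observe that $\datavec_s = A(K_s,\datavec)$ takes only the two values $\mu_1$ and $\mu_2$ on the two sensitive halves, and that both $\datavec_s$ and every admissible $\datavec_C = A(K_C,\datavec)$ are averages of $\datavec$ against column‑normalised kernels, so they all share the mean of $\datavec$, namely $\bar\datavec = \frac1n\datavec^\top\vec{1} = \frac12$. Hence $z := \alpha\datavec_s+(1-\alpha)\datavec_C$ has mean $\frac12$ for \emph{every} admissible $\datavec_C$, so $F(z)$ is a fixed positive multiple of $\mathrm{Var}(z)$, and it is Schur‑convex as an inequality index. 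Next I would pin down the feasible set of smoothed vectors: since each $K_c$ is a node permutation of a uniform block‑diagonal matrix with equal‑size blocks and $K_c\vec{1}=\lambda_c\vec{1}$, the map $\datavec\mapsto A(K_c,\datavec)$ simply replaces each coordinate of $\datavec$ by the average of $\datavec$ over its block, i.e.\ it is the orthogonal projection $M_c$ onto the block‑constant subspace, and $M_c$ is doubly stochastic, symmetric, PSD. Therefore $\datavec_C = \sum_c\frac{\lambda_c}{\sum_{c'}\lambda_{c'}}M_c\datavec = M\datavec$ for a doubly stochastic symmetric PSD matrix $M$. The two relevant extreme cases are $M=I$ (all blocks singletons), giving $\datavec_C=\datavec$, and $M=\frac1n\vec{1}\vec{1}^\top$ (one block of size $n$), giving $\datavec_C=\bar\datavec\,\vec{1}$; both are admissible.

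For the \emph{maximum} I would argue that $\datavec_C=\datavec$ is the feasible choice that spreads $z$ out the most, using the variance decomposition $\mathrm{Var}(z)=\alpha^2\mathrm{Var}(\datavec_s)+(1-\alpha)^2\mathrm{Var}(\datavec_C)+2\alpha(1-\alpha)\mathrm{Cov}(\datavec_s,\datavec_C)$. The term $\mathrm{Var}(\datavec_s)$ is fixed; $\mathrm{Var}(\datavec_C)=\frac1n\|M\datavec\|^2-\frac14$ is maximised at $M=I$ since $\|M\datavec\|\le\|\datavec\|$ (equivalently $M\datavec\preceq\datavec$, block‑averaging being a composition of Pigou–Dalton transfers inside blocks); and the cross term obeys the block‑wise identity $\mathrm{Cov}(\datavec_s,\datavec)-\mathrm{Cov}(\datavec_s,K_c\datavec)=\frac bn\sum_{B}\mathrm{Cov}_B(\datavec_s,\datavec)$, where $\mathrm{Cov}_B(\datavec_s,\datavec)=\theta_B(1-\theta_B)(\mu_1-\mu_2)(p_1^B-p_2^B)$ with $\theta_B$ the fraction of block $B$ lying in the first sensitive group and $p_i^B$ the fraction of $1$‑labels among that group's members in $B$. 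One then has to show the aggregate right‑hand side has the sign forcing $\mathrm{Cov}(\datavec_s,\datavec_C)\le\mathrm{Cov}(\datavec_s,\datavec)$, using $\mu_1\ge\mu_2$ and the equal‑block‑size constraint; combining the three pieces yields $F(z)\le F\big(\alpha\datavec_s+(1-\alpha)\datavec\big)$, i.e.\ the claimed argmax.

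For the \emph{minimum} the argument is symmetric: with $\datavec_C=\bar\datavec\,\vec{1}$ one gets $z=\alpha\datavec_s+(1-\alpha)\bar\datavec\,\vec{1}$, which is just $\datavec_s$ shrunk toward its mean and hence carries only the unavoidable spread from $\datavec_s$; any other admissible $\datavec_C$ adds the mean‑zero perturbation $(1-\alpha)(\datavec_C-\bar\datavec\,\vec{1})$, and expanding $\mathrm{Var}(z)$ it is enough to show $2\alpha(1-\alpha)\mathrm{Cov}(\datavec_s,\datavec_C)+(1-\alpha)^2\mathrm{Var}(\datavec_C-\bar\datavec\,\vec{1})\ge 0$, for which the key point is again $\mathrm{Cov}(\datavec_s,\datavec_C)\ge 0$ — block‑averaging over equal‑size blocks cannot anti‑correlate the smoothed outcome with the sensitive‑group indicator. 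That gives $F(z)\ge F\big(\alpha\datavec_s+(1-\alpha)\bar\datavec\,\vec{1}\big)$.

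The main obstacle in both directions is precisely this cross term $\mathrm{Cov}(\datavec_s,\datavec_C)$ — equivalently, making the majorization $M\datavec\preceq\datavec$ (resp. $M\datavec\succeq\bar\datavec\,\vec{1}$) survive the affine mixing with the fixed non‑constant vector $\datavec_s$. A naive majorization/Transfer‑Principle argument that ignores the block structure is not enough, since a partially‑smoothed $\datavec_C$ could in principle be \emph{more} correlated with $\datavec_s$ than $\datavec$ itself. This is exactly where the hypotheses that each confounder splits the population into equal‑size, equal‑weight blocks and that both the sensitive attribute and the labels are balanced in half must be used, via a careful block‑by‑block accounting of $\mathrm{Cov}_B(\datavec_s,\datavec)$ together with the norm comparisons $\|M\datavec\|\le\|\datavec\|$ and $\|M\datavec\|\ge\|\bar\datavec\,\vec{1}\|$; I would scrutinise whether the cross‑term sign holds unconditionally or whether a mild side condition (e.g.\ on $\alpha$, or on $q$ relative to $p$) is needed.
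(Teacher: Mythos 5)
Your reduction to the decomposition $\mathrm{Var}(z)=\alpha^2\mathrm{Var}(\datavec_s)+(1-\alpha)^2\mathrm{Var}(\datavec_C)+2\alpha(1-\alpha)\mathrm{Cov}(\datavec_s,\datavec_C)$ is sound, and the claim $\mathrm{Var}(\datavec_C)\le\mathrm{Var}(\datavec)$ is correct (block averaging is an orthogonal projection). But the argument is not a proof: both directions hinge on the comparison of the cross term, which you explicitly leave open (``one then has to show\ldots'', ``I would scrutinise whether\ldots''). That gap is not closable, and your own block-wise formula $\mathrm{Cov}_B(\datavec_s,\datavec)=\theta_B(1-\theta_B)(\mu_1-\mu_2)(p_1^B-p_2^B)$ shows why: a block pairing label-$0$ members of the advantaged group with label-$1$ members of the disadvantaged group contributes \emph{negatively}, so $\mathrm{Cov}(\datavec_s,\datavec_C)>\mathrm{Cov}(\datavec_s,\datavec)$ is possible, and for $\alpha$ close to $1$ the gain in the cross term beats the loss in $\mathrm{Var}(\datavec_C)$. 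Concretely: take $n=12$, group $1=\{1,\dots,6\}$ with labels $(1,1,1,1,0,0)$, group $2=\{7,\dots,12\}$ with labels $(1,1,0,0,0,0)$ (so $\epsilon=\tfrac16$), and a single confounder with equal-size blocks $\{1,2\},\{3,4\},\{5,7\},\{6,8\},\{9,10\},\{11,12\}$, giving $\datavec_C=(1,1,1,1,\tfrac12,\tfrac12,\tfrac12,\tfrac12,0,0,0,0)$. At $\alpha=\tfrac34$ the vector $\alpha\datavec_s+(1-\alpha)\datavec_C$ has entries $(\tfrac34)^{\times 4},(\tfrac58)^{\times 2},(\tfrac38)^{\times 2},(\tfrac14)^{\times 4}$, which strictly majorizes the entries $(\tfrac34)^{\times 4},(\tfrac12)^{\times 4},(\tfrac14)^{\times 4}$ of $\alpha\datavec_s+(1-\alpha)\datavec$. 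Hence every strictly Schur-convex index, in particular the normalized variance, is strictly larger at this $\datavec_C$ than at $\datavec_C=\datavec$: the argmax claim of the lemma fails, exactly at the point you hedged on. (This also pushes the resulting inequality above the closed-form upper bound of Proposition~\ref{prop:inequality-bounds}.)

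For comparison, the paper proves the maximizer claim by an exchange argument: any $\datavec_C$ with a fractional entry can allegedly be improved by swapping a label-$0$ node from a low-average partition with a label-$1$ node from a higher-average one, so that by the Transfer Principle only $\datavec_C=\datavec$ can be maximal. This is a genuinely different route from your variance decomposition, but it quietly treats the swap as a single Pigou--Dalton transfer on $\alpha\datavec_s+(1-\alpha)\datavec_C$, which it is not: a swap perturbs every coordinate in both affected blocks, and the swapped nodes carry their fixed $\datavec_s$ values with them. In the example above the two mixed blocks have identical averages, and the only available swap turns $\datavec_C$ into $\datavec$ and strictly \emph{decreases} the inequality. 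Your decomposition is the more honest diagnostic, since it isolates the quantity $\mathrm{Cov}(\datavec_s,\datavec_C)-\mathrm{Cov}(\datavec_s,\datavec)$ whose sign is the whole issue; to get a true statement one must add a hypothesis preventing confounding blocks from anti-aligning labels with the sensitive partition (e.g., $p_1^B\ge p_2^B$ in every block) or restrict $\alpha$.
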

\begin{proof}
    \textbf{Maximizer} First, let us observe that $\datavec_C = \datavec$ is the only smoothed confounding vector where all entries are integer, given that a node will always have similarity with itself. Thus, any other smoothed vector $\datavec_C'$ must have fractional smoothed values. We will now show that such a vector $\datavec_C'$ can be modified to yield a higher inequality and is thus not the maximizing confounding vector. 

    Let node $i$ be a node with fractional smoothed value $\datavec_C'(i) \in (0, 1)$. It must then be the case that there is a confounding attribute $c$ that partitions the population such that at least two partitions contain a mixture of $0$ and $1$. We show that it cannot be the case that there is only one partition with mixed values. Assume that the single fractional value equals $\max A\left(K_c, \datavec\right)$, then all other confounding partitions must have a smoothed average of $0$. In this case, the partition containing node $i$ must have a smoothed average of $1$, as all nodes with value $1$ must be in the same partition as node $i$ and there can be no zeros in the partition for partitions to be of equal size (unless the entire graph is one group, which corresponds to the minimizer. Thus, in the case where $\datavec_c'(i) = \max \datavec_c'$, it must be that $\datavec_c' = \datavec$, contradicting the definition of $\datavec_c'$ as a fractional smoothed vector. For an analogous reason, there cannot be a single fractional value where all other partitions have an average of $1$.

    Let node $i$ belong to the fraction partition with the smaller smoothed average in $A\left(K, \datavec\right) = \alpha\datavec_s + (1 - \alpha)\datavec_C$. It is possible for the partition containing node $i$ to swap a node of value zero with a node of value $1$ in the partition with a higher smoothed average. By the Transfer Principle, this swap increases the inequality thus $\datavec_C'$ is not the maximizer.

    \textbf{Minimizer} Let $\datavec_0$ be the smoothed outcomes vector when $\datavec_C = \overline{\datavec}$: $\datavec_0 = \alpha\datavec_s + (1 - \alpha)\overline{\datavec}$. Recall that due to the sensitive attributes construction, the components of $\datavec_0$ take on two values: $\{\alpha\highavg + \frac{1 - \alpha}{2}, \alpha\lowavg + \frac{1 - \alpha}{2}\}$. Let us define these as $\datavec_0^+$ and $\datavec_0^-$ respectively.

    Now, we prove by contradiction. Consider an alternative smoothed vector $\datavec' = \alpha\datavec_s + (1 - \alpha)\datavec_C$ where $\datavec_C \ne \overline{\datavec}$ and $F\left(\datavec'\right) < F\left(\datavec_0\right)$. Now it must be the case that there exists a node $i$ where $\datavec_0(i) = \datavec_0^+$ and $\datavec'(i) > \datavec_0^+$. Further there must exist a corresponding node $j$ where $\datavec_0(j) = \datavec_0^-$ and $\datavec'(j) < \datavec_0^-$. It is then possible for a confounding partition containing node $i$ to swap a node with the confounding partition containing node $j$. In doing so, the Transfer Principle implies the inequality must decrease, suggesting that any $\datavec_C \ne \overline{\datavec}$ is not the minimizer. Note that after the swap, $\datavec'(j)$ remains less than $\datavec'(i)$ due to the outcome imbalance between the sensitive groups.
\end{proof}

\subsection{Proposition \ref{prop:inequality-bounds}} \label{sec:proof-bounds}
\begin{proof}
The structure of the proof is as follows: we first characterize the confounding kernels $\sum_C K_c$ corresponding to the upper and lower bounds. Then, we insert these confounding kernels into our measure of group-free inequality to derive the closed-form bounds. 

Because each confounding attribute divides the population into partitions of equal size and weight ($K_c\Vec{1} = \lambda_c\Vec{1}$) we can simplify the group-free inequality measure. The following simplifications are due to the scale-invariant property of the inequality function:
\begin{align}
    \Delta_b' &= F\left(A(K, \datavec), K\Vec{1}\right)\\
    &= F\left(A(K_s + K_C, \datavec), \left(K_s + K_C\right)\Vec{1}\right)\\
    &= F\left(A(K_s + K_C, \datavec), \left(\lambda_c + \frac{np}{2}\right)\Vec{1}\right)\\
    &= F\left(A(K_s + K_C, \datavec)\right)
\end{align}

Depending on the values of $p$ and $q$, the value of $A(K_s + K_C, \datavec)$ can be re-parameterized as the weighted sum of $A\left(K_s, \datavec\right)$ and $A\left(K_C, \datavec\right)$. In the below, we set $\lvert K_C \lvert_1 = qn^2/2$: 
\begin{align}
    A\left(K_s + K_C, \datavec\right) &= \frac{K_s\datavec + K_C\datavec}{\left(K_s + K_C\right)\Vec{1}}\\
    &= \frac{\left(K_s\Vec{1}\right) \otimes A\left(K_s, \datavec\right) + \left(K_C\Vec{1}\right) \otimes A\left(K_C, \datavec\right)}{\left(K_s + K_C\right)\Vec{1}}\\
    &= \frac{\left(\frac{n}{2}p\right) A\left(K_s, \datavec\right) + \left(\frac{n}{2}q\right) A\left(K_C, \datavec\right)}{\frac{n}{2}(p+q)}\\
    &= \frac{p}{p+q} A\left(K_s, \datavec\right) + \frac{q}{p+q} A\left(K_C, \datavec\right)\\
\end{align}
Let us re-parameterize the weighted sum with parameter $\alpha = p / (p+q)$ that interpolates between $A\left(K_s, \datavec\right)$ and $A\left(K_C, \datavec\right)$. Let $\datavec_s = A\left(K_s, \datavec\right)$ be the fixed, ground-truth group averages whereas let $\datavec_C= A\left(K_C, \datavec\right)$ be variable (depending on the kernel $K_C$) and the average of $\datavec$ based on the confounding attributes. We then have:
\begin{align}
    \Delta_b' &= F\left(A(K_s + K_C, \datavec)\right)\\
    &= F\left(\alpha A\left(K_s, \datavec\right) + (1 - \alpha) A\left(K_C, \datavec\right)\right)\\
    &= F\left(\alpha \datavec_s + (1 - \alpha) \datavec_C\right) \label{eqn:confounding-inequality}
\end{align}

Bounding our group-free inequality measure in the presence of confounding attributes then boils down to bounding the inequality measure characterized in Equation \ref{eqn:confounding-inequality}. In Lemma \ref{lemma:maxmin}, we establish the values of $\datavec_C$ that correspond with the minimum and maximum values of group-free inequality. 

Given that $F$ is additively decomposable, we know that the minimum value of group-free inequality is zero and corresponds with $A\left(K, \datavec\right) = \overline{\mathbf{y}}$; in other words, when the kernel represents the entire population as one group. On the other hand, the maximum value corresponds with $A\left(K, \datavec\right) = \datavec$, when two nodes share similarity only with other nodes of the same label. Lemma \ref{lemma:maxmin} shows that linearly combining an initial, fixed smoothed vector $\datavec_s$ with $\datavec$ increases inequality the most and combining $\datavec_s$ with $\overline{\datavec}$ decreases inequality the most.

The remainder of the proof consists of computing our group-free inequality measure using the confounding kernels corresponding to the upper and lower bounds. We analyze the setting in which $K_C = qn^2/2$, as the bounds widen as the weight of the confounding kernel increases. We also note that in computing closed-form bounds it is sufficient to analyze any $K_C$ that satisfies $A\left(K_C, \datavec\right) = \datavec$ for the upper bound and $A\left(K_C, \datavec\right) = \overline{\datavec}$ for the lower bound due to Equation \ref{eqn:confounding-inequality}.

For the lower bound, we analyze the case in which the confounding attributes create similarities between all pairs of nodes, eroding the group structure from the sensitive attribute. We can see that the kernel $K_C$ that takes on a constant value of $q/2$ for all $i, j$ yields $A\left(K_C, \datavec\right) = \overline{\datavec}$. Then kernel $K = K_s + K_C$ is equivalent in structure to the kernel shown in Figure \ref{fig:confounder-blend}, replaced with $p'$ and $q'$, where $p' = p + \frac{q}{2}$ and $q' = \frac{q}{2}$. Using Lemma \ref{lemma:blend}, the between-group inequality is then:
 \begin{align*}
     \Delta'_b &= \left(\frac{p' - q'}{p' + q'}\right)^2 \Delta^0_b\\
     &= \left(\frac{p + q/2 - q/2}{p + q/2 + q/2}\right)^2 \Delta^0_b\\
     &= \left(\frac{p}{p + q}\right)^2 \Delta_b^0
 \end{align*}

For the upper bound, we consider the case in which the confounding attributes exactly identify the nodes with a positive label $\datavec_i = 1$. In this case, the confounding kernel $K_C = \sum_C K_c$ has the form: 
\begin{equation*}
K_C(i,j) = \begin{cases} 
      q & \datavec_i = \datavec_j = 1\\
      q & \datavec_i = \datavec_j = 0\\
      0 & \text{otherwise} 
   \end{cases}
\end{equation*}
\begin{figure}
    \centering
    \includegraphics[width=0.9\linewidth]{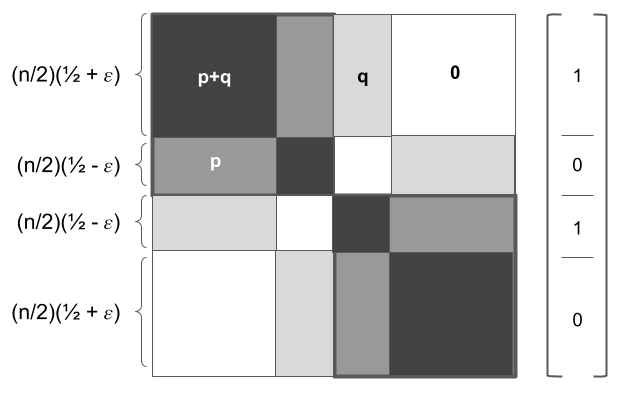}
    \caption{The figure above visualizes the smoothed values $A(K, \datavec) = K\datavec$ for the upper bound in Proposition \ref{prop:inequality-bounds}. There are two sensitive classes, each of size $\frac{n}{2}$; in one class, the ratio of nodes with positive labels is $\frac{1}{2} + \epsilon$ and in the second group the ratio is $\frac{1}{2} - \epsilon$.}
    \label{fig:upper-bound-kernel}
\end{figure}
 The calculation of the smoothed values $A\left(K, \datavec\right)$ is shown in Figure \ref{fig:upper-bound-kernel}. The four values in the product, from top to bottom are:
 \begin{align}
     \frac{(p+q)\highavg + q\lowavg}{(p +q)\highavg + p\lowavg + q\lowavg} &= \frac{p\highavg + q}{p + q}\\
     \frac{p\highavg}{(p+q)\lowavg + p\highavg + q\highavg} &= \frac{p\highavg}{p + q}\\
     \frac{(p+q)\lowavg + q\highavg}{(p+q)\lowavg + p\highavg + q\highavg} &= \frac{p\lowavg + q}{p + q}\\
     \frac{p\lowavg}{(p+q)\highavg + p\lowavg + q\lowavg} &= \frac{p\lowavg}{p + q}
 \end{align}
 Because the normalized variance is invariant to scaling by a constant, we can disregard the $p+q$ denominators in the four above terms. The between-group inequality is then:
 \begin{align*}
     \begin{split}
     F(A(K, \datavec)) = F\biggl(&\left[p\highavg + q, p\highavg, p\lowavg + q, p\lowavg\right],\\
     &\left[\frac{1}{2} + \epsilon, \frac{1}{2} - \epsilon, \frac{1}{2} - \epsilon, \frac{1}{2} + \epsilon\right]\biggr)
     \end{split}
 \end{align*}

To calculate the normalized variance, we first calculate the mean: 
 \begin{align*}
     \mu &= \frac{1}{\sum w_i}\sum w_i \datavec_i\\
     &= \frac{1}{2}\biggl[\highavg\left(p\highavg + q\right) + \lowavg p\highavg \\
     &\quad + \lowavg\left(p\lowavg + q\right) + \highavg\left(p\lowavg + q\right)\biggr]\\
     &= \frac{1}{2}\left(p\left(\highavg^2 + \lowavg^2\right)  + 2p\highavg\lowavg + q\right)\\
     &= \frac{1}{2}\left(p\left(\highavg + \lowavg\right)^2 + q\right)\\
     &= \frac{p + q}{2}
 \end{align*}
 The normalized variance itself is then:
  \begin{align*}
     F(A(K, \datavec)) &= F\biggl(\left[p\highavg + q, p\highavg, p\lowavg + q, p\lowavg\right],\\
     &\quad \left[\frac{1}{2} + \epsilon, \frac{1}{2} - \epsilon, \frac{1}{2} - \epsilon, \frac{1}{2} + \epsilon\right]\biggr)\\
     &= \frac{1}{\mu^2 \sum w_i} \sum w_i\left(x_i - \mu\right)^2\\
     &= \frac{4p\epsilon^2(p + 2q) + q^2}{(p+q)^2}\\
     &= \Delta_b^0 + \left(\frac{q}{p + q}\right)^2\left(1 - \Delta_b^0\right)
 \end{align*}
 \end{proof}

\section{Additional Experimental Details}
\label{sec:app-exps}
\subsection{Network Preprocessing}
We filter the networks to highlight the community structure in each. For \emaileu, \lastfm, and \deezer, we filter for nodes with at least five neighbors, and across all four networks we filter for ground-truth classes with at least $100$ nodes. For \lastfm{} and \deezer{}, we filter for users who have provided at least $50$ ratings on the respective platforms. Finally, we take the largest connected component of the remaining graph.

When computing the kernel with Laplacian Eigenmaps we process all networks as undirected networks.

\subsection{Node classification node feature analysis}\label{sec:node-classification-node-features}
 In Figures \ref{fig:polblogs_feature_activation_correlation} and \ref{fig:email_feature_activation_correlation}, we visualize the correlation between node significance features, such as degree and betweenness centrality, and prevalence of positive labels following post-processing. For \polblogs{}, nodes of lower degree as well as nodes of lower centrality are more likely to be labeled positively following post-processing whereas for \emaileu{}, the trend is generally reversed.
\begin{figure}
     \centering
     \begin{subfigure}[b]{0.32\textwidth}
         \centering
         \includegraphics[width=\textwidth]{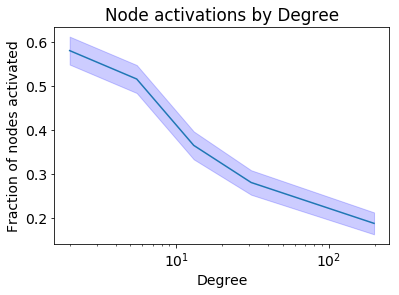}
         \caption{Degree}
         \label{fig:polblogs_degree_activation}
     \end{subfigure}
    \hspace{.05\linewidth}
     \begin{subfigure}[b]{0.33\textwidth}
         \centering
         \includegraphics[width=\textwidth]{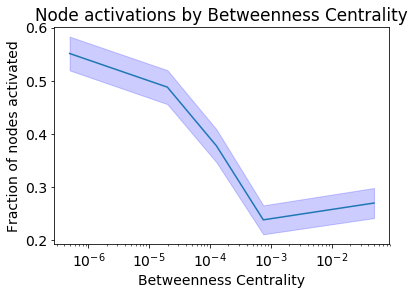}
         \caption{Betweenness Centrality}
         \label{fig:polblogs_centrality_activation}
     \end{subfigure}
    \caption{The above figures show the correlation between individual node features for the \polblogs{} network and prevalence of positive labels, following post-processing at the highest level of $\theta_{min}$ ($\theta_{min} = 0.35$).}
    \label{fig:polblogs_feature_activation_correlation}
\end{figure}
\begin{figure}
     \centering
     \begin{subfigure}[b]{0.32\textwidth}
         \centering
         \includegraphics[width=\textwidth]{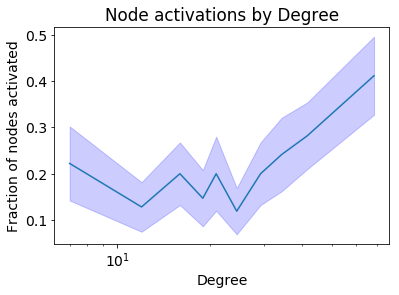}
         \caption{Degree}
         \label{fig:email_degree_activation}
     \end{subfigure}
    \hspace{.05\linewidth}
     \begin{subfigure}[b]{0.33\textwidth}
         \centering
         \includegraphics[width=\textwidth]{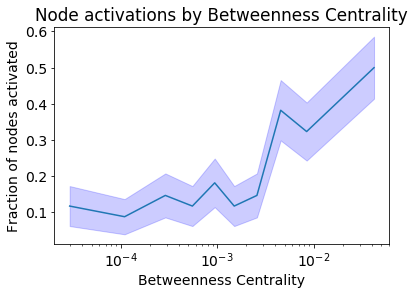}
         \caption{Betweenness Centrality}
         \label{fig:email_centrality_activation}
     \end{subfigure}
    \caption{Analogous to Figure \ref{fig:polblogs_feature_activation_correlation}, the above figure shows the correlation between nodes features and prevalence of positive labels for \emaileu{} ($\theta_{min} = 0.2$)}
    \label{fig:email_feature_activation_correlation}
\end{figure}

\section{Additional details maximizing information access experiments}\label{sec:influence-max-additional}
Assess alternative community detection baselines as well as the impact of the embedding dimension for the information access maximization experiments. 

We assess to the robustness of our results for the maximizing information access experiments in subsection \ref{sec:influence-max} by evaluating the effect of the embedding dimension as well as considering alternative community detection baselines. As discussed in the main text, the community detection baseline that maximizes the exposure of the least exposed group does not have sufficient granularity. As an alternative, we also consider a baseline that maximizes the total welfare of all communities. Let $p_i\left(\mathcal{S} \cup s\right)$ be the probability that node $i$ is informed when an information cascade begins at $\mathcal{S} \cup s$. Then define $\datavec_i\left(\mathcal{S} \cup s\right) = \sqrt{p_i\left(\mathcal{S} \cup s\right)}$. The welfare baseline then greedily maximizes: 
\begin{align}
    \argmax_{s \in V\setminus \mathcal{S}} \sum_{i=1}^n A\left(K, \datavec\left(\mathcal{S} \cup s\right)\right)
\end{align}
Where $K$ is the Louvain community detection kernel. Figure \ref{fig:alterantive-community-detection} compares the performance of the original community detection baseline (``Maximin") against the alternative welfare baseline (``Welfare") and our group-free approach. For each community detection baseline, we also consider an alternative community detection algorithm in greedy modularity maximization. Figure \ref{fig:alterantive-community-detection} shows that the community welfare baseline clearly outperforms the original baseline for \polblogs{} and \lastfm{} whereas the community detection algorithm has little impact. For \polblogs{} and \emaileu{} it is clear that our group-free method achieves the lowest ground-truth inequality. For \lastfm{}, our method achieves inequality on par with the community welfare baseline for large seed sets. 
\begin{figure}
    \centering
    \includegraphics[width=\linewidth]{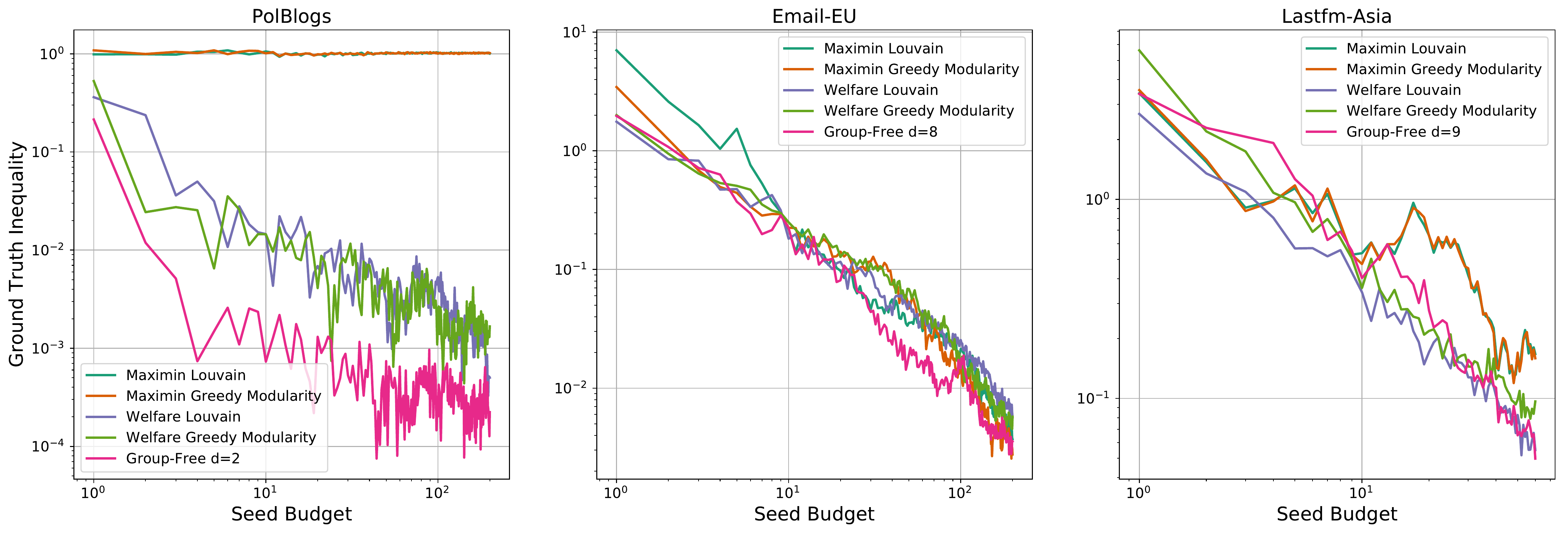}
    \caption{Alternative baselines using community detection}
    \label{fig:alterantive-community-detection}
\end{figure}

We also assess the sensitivity of our results to the choice of the embedding dimension. Figure \ref{fig:alterantive-embedding dimension} shows the performance of the embedding dimension from the main text, which matches the number of ground-truth groups, against alternative dimension values. The Figure confirms that the original number of embedding dimensions performs the best for \polblogs{} and \emaileu{}, and in all cases, kernels constructed with an alternative number of embedding dimensions perform on par.
\begin{figure}
    \centering
    \includegraphics[width=\linewidth]{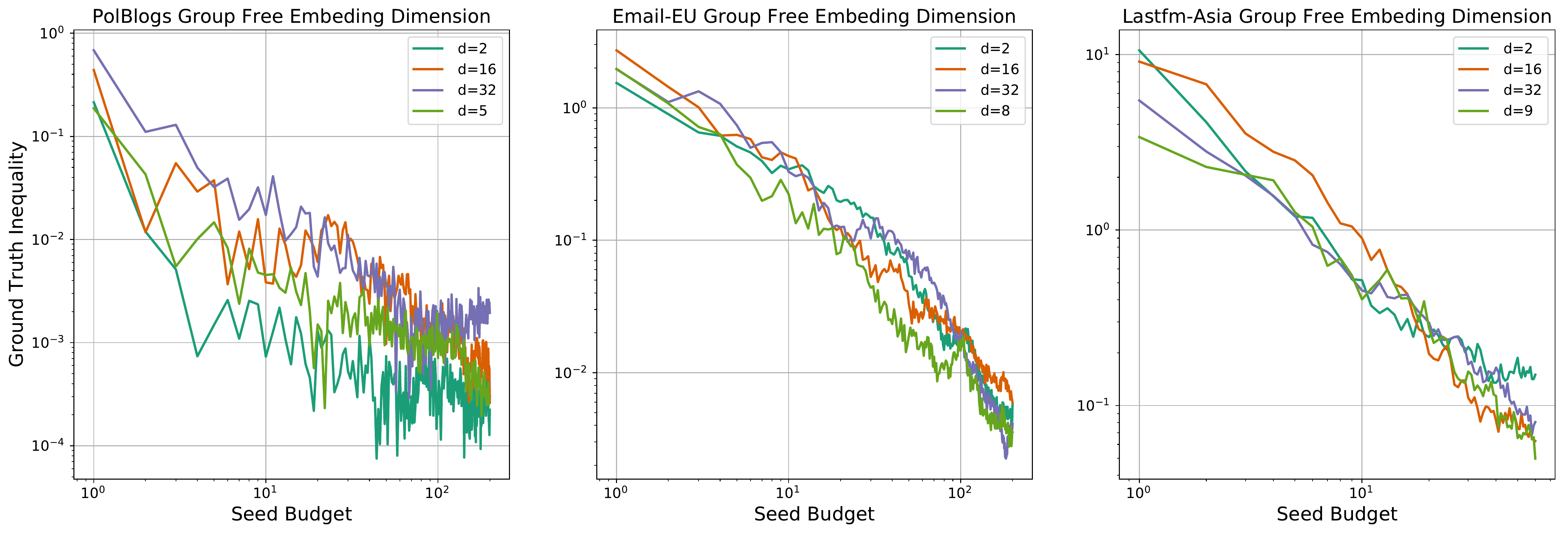}
    \caption{Assess the sensitivity to embedding dimension}
    \label{fig:alterantive-embedding dimension}
\end{figure}

\section{Additional details for fairness in recommendation} \label{app:ranking}

\subsection{Implementation details}

Our experiments are fully implemented in Python 3.9 with PyTorch\footnote{\url{https://pytorch.org/}}. We used the library Networkx\footnote{\url{https://networkx.org/}} to pre-process the networks and for the available implementations of community detection algorithms. 

We preprocessed the Lastfm-Asia dataset with the following steps. First, we extracted all users who listened to at least $50$ artists. Then we removed all user nodes with less than $5$ neighbors, removed groups with less than $100$ users and finally, we retained the largest connected component of users. We obtain a dataset with $n= 2, 785$ users, $m = 7,842$ items, and $17,017$ edges between users. Using country as the group variable, we obtain 9 groups of sizes: 613, 168, 212, 148, 542, 491, 222, 247, 142.

We applied the same preprocessing steps to the Deezer-Europe dataset. We obtain a dataset with $n= 1, 090$ users, $m =  31, 241 $ items, and $ 3, 623 $ edges between users. Using the available binary ``gender'' variable in the dataset, we obtain 2 groups of sizes: 658, 432.

Following \citep{patro2020fairrec,wang2021user,do2021two}, we generate a full user-item preference matrix $(\rho_{ij})_{i,j}$, by applying a standard matrix factorization algorithm\footnote{We chose Alternating Least Squares \citep{hu2008collaborative}, using the implementation of the Python library Implicit (MIT License): \url{https://implicit.readthedocs.io/}} to the incomplete user-item interaction matrix. 

We consider a top-$\nslots$ ranking setting with $\nslots=40$ slots and position weights $b_k = \frac{1}{\log_2(1+k)}$ for $k\leq \nslots$ and $b_k = 0$ for $k \in \{\nslots + 1,\ldots, m\}.$ 

\subsection{Additional baselines and parameters}

We present more results on the Lastfm-Asia dataset. First, we study the sensitivity of the community detection baseline to the choice of algorithm and parameters. In addition to \louvain, we consider the greedy modularity maximization (\greedymodu) algorithm of \cite{clauset2004finding}. For both algorithms, we vary the resolution parameter in $\{0.1,1,10\}$ where values smaller (resp. higher) than 1 favor larger (resp. smaller) communities. The utility-fairness trade-offs achieved are shown in Figure \ref{fig:app-rk}. As claimed in subsection \ref{sec:ranking}, the best results are obtained by \louvain with resolution parameter equal to 1.

Then, we study the sensitivity of our \lapker method to the dimension $d$ of the node embedding space, i.e. the number of eigenvectors of the Laplacian from which the embeddings are constructed. Figure \ref{fig:app-rk} (right) shows that we obtain the best results with $d=17.$

\begin{figure}[t]
    \centering
\includegraphics[width=0.45\linewidth]{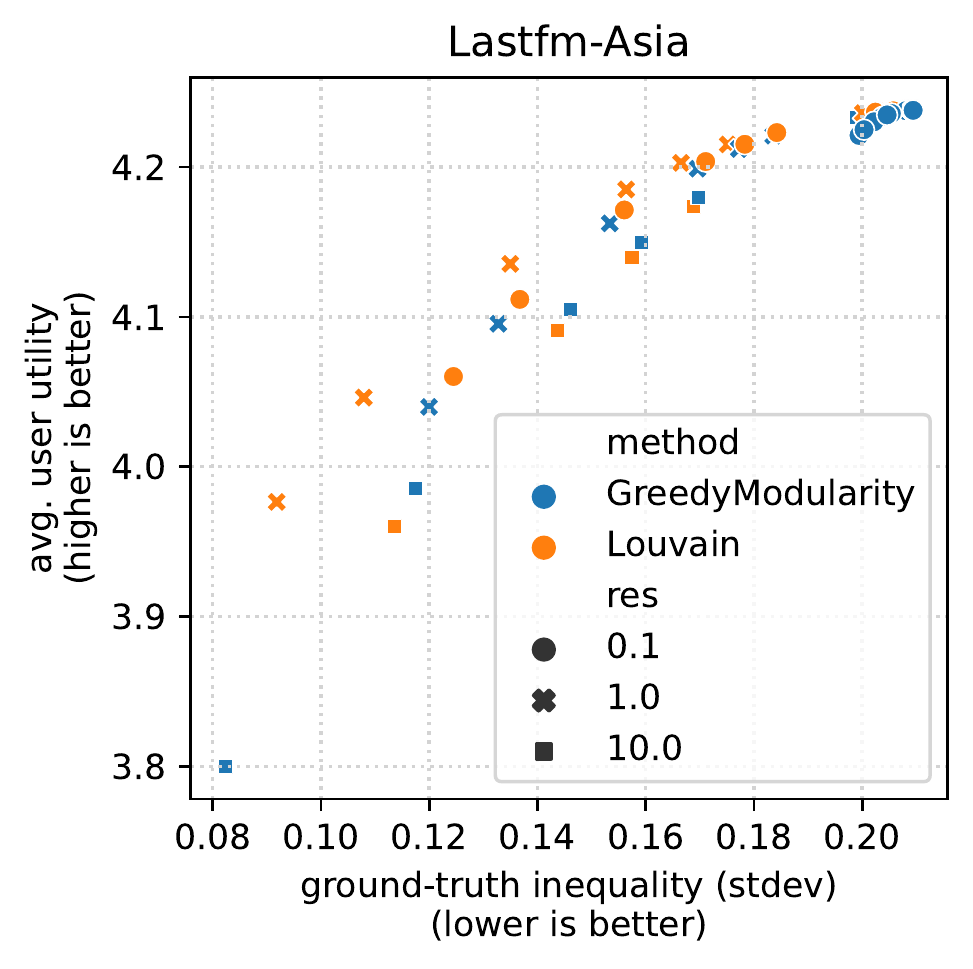}
\includegraphics[width=0.45\linewidth]{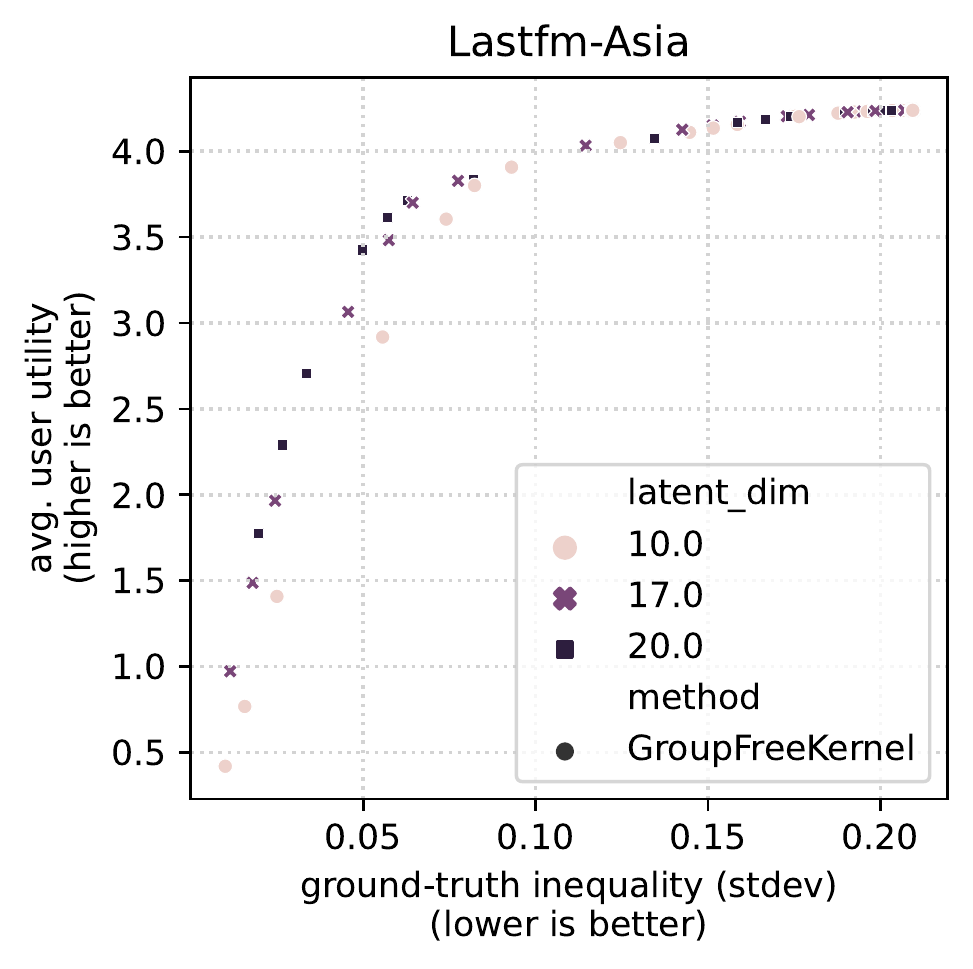}
    \caption{Utility-fairness trade-offs. (Left) Community detection baselines.  (Right) Laplacian kernels.}\label{fig:app-rk}
\end{figure}

\section{Additional related work}\label{sec:additional-related-work}

\paragraph{Fairness without sensitive attributes} Past work has raised concerns regarding the dependence on sensitive attributes in algorithmic fairness definitions. \citet{demographic-risks} details both the individual and group harms of calls to collect additional data in the name of fairness. At the individual level, \citet{demographic-risks} detail risks of privacy violations, misrepresentation or miscategorization, and uses of data beyond informed consent. At the group level, the collection of sensitive data can lead to invasive surveillance, reinforce misleading distinctions between groups, and obscure the variation of attributes over time or within subgroups. A variety of approaches to algorithmic fairness exist to mitigate these concerns. For example, past work has developed methods for fairness that protect the privacy of the sensitive attribute \cite{juarez2022you, jagielski2019differentially, veale2017fairer, kilbertus2018blind}; other methods use noisy protected attributes \cite{mehrotra2021mitigating} or proxies \cite{diana2022multiaccurate, gupta2018proxy}. Our work is the first to leverage homophily in social networks as a correlate for sensitive attributes to derive group-free measures.

\paragraph{Segregation measures}
Geographers and social scientists have developed measures of spatial segregation to quantify, for instance, residential racial segregation in cities \cite{schelling1971} and polarization in voting patterns. The most well-known of these measures is Moran's I. Similar to our work, Moran's I measures segregation without the need for pre-defined groups: given a network, represented as a weight matrix, such as the network of census blocks, and node-level quantities, such as the prevalence of a minority population in a census block, Moran's I quantifies the correlation between node quantities and tie strength, where a higher value indicates that well-connected nodes share similar quantities. The advantage of Moran's I is its ease of computation as the Rayleigh quotient between the weight matrix and the node-quantity vector; however, one limitation is the difficulty of comparing the value of Moran's I across graphs \cite{Duchin2022}. When the weight matrix is the adjacency, for instance, the range of possible values depends on the graph's degree distribution. 

A related measure is assortativity from the network science literature \cite{newman2002assortative, newman2003mixing}. Assortativity is measured as the Pearson correlation coefficient between neighbors for a given node attribute. Though degree is the most common node attribute chosen for assortativity, any node-level quantity is applicable.

\paragraph{Recommender systems} Recently, there has been increasing interest in incorporating fairness criteria in recommendation objectives. Multiple audits have been conducted to assess the fairness of existing recommender systems for their users \citep{sweeney2013discrimination,datta2015automated,imana2021auditing,asplund2020auditing,lambrecht2019algorithmic}. The assessed criteria include showing a job to the same proportion of men and women \citep{imana2021auditing} or showing the same type of housing ads to users belonging to different racial groups \citep{asplund2020auditing}. A few possible formulations of these requirements have been proposed in academic work \citep{usunier2022fast} and a real-world ad system \citep{bogen2023towards} -- our formulation in subsection \ref{sec:ranking} is a group-free version of the ``balanced exposure to user groups'' objective from \citep{usunier2022fast}.

\end{document}